\newtheorem{theorem}{Theorem}
\numberwithin{theorem}{section}
\newtheorem{lemma}[theorem]{Lemma}
\newtheorem{corollary}[theorem]{Corollary}
\numberwithin{observation}{section}
\numberwithin{definition}{section}
\numberwithin{conjecture}{section}
\newtheoremstyle{introstyle}
{}
{}
{\itshape}
{}
{\bfseries}
{.}
{ }
{}
\newcommand{\pr}[1]{\mathbb{P}\left[ #1 \right]}
\newcommand{\prc}[2]{\mathbb{P}\left[ #1 \mid #2 \right]}
\newcommand{\vc}{\textsc{Vertex Cover}}
\newcommand{\cc}{\textsc{Correlation Clustering}}
\author{Nate Veldt \\ Department of Computer Science and Engineering \\  Texas A\&M University \\ nveldt@tamu.edu}
\begin{document}
	\title{Growing a Random Maximal Independent Set Produces a 2-approximate Vertex Cover}
	\date{}
	\maketitle
	\begin{abstract}
		This paper presents a fast and simple new 2-approximation algorithm for minimum weighted vertex cover. The unweighted version of this algorithm is equivalent to a well-known greedy maximal independent set algorithm. We prove that this independent set algorithm produces a 2-approximate vertex cover, and we provide a principled new way to generalize it to node-weighted graphs. 
		Our analysis is inspired by connections to a clustering objective called correlation clustering. To demonstrate the relationship between these problems, we show how a simple \emph{pivot} algorithm for correlation clustering implicitly approximates a special type of hypergraph vertex cover problem. 
		Finally, we use implicit implementations of this maximal independent set algorithm to develop fast and simple 2-approximation algorithms for certain edge-deletion problems that can be reduced to vertex cover in an approximation preserving way.
	\end{abstract}

	\section{Introduction}
A set of nodes in a graph is a vertex cover if every edge in the graph is adjacent to at least one node in the cover. The \textsc{Vertex Cover} {problem} is the task of finding a minimum cardinality vertex cover in a graph, or a minimum weight cover in the case of node-weighted graphs. This is one of the most well-known NP-hard optimization problems, and the decision version of the problem is one of Karp's 21 NP-complete problems~\cite{karp1972reducibility}. There are many 2-approximation algorithms for both weighted and unweighted \vc{} that date back to the 1970s and 1980s~\cite{bar1981linear,bar1985local,Garey:1990:CIG:574848,pitt1985simple,savage1982depth}. More sophisticated algorithms also exist with approximation factors that are slightly (though not a constant amount) better than 2~\cite{bar1985local,halperin2002improved,karakostas2009better}, while for every constant $\varepsilon > 0$ the problem is UGC-hard to approximate below a factor of $2 - \varepsilon$~\cite{khot2008vertex}.
Independent of the unique games conjecture, the problem is NP-hard to approximate below a factor of 1.3606~\cite{dinur2005hardness}.
\vc{} has also been studied extensively from the perspective of fixed-parameter tractability and kernelization~\cite{abu2006scalable,chen2008kernel,chen2001vertex,chen2006improved,garg2016raising,kratsch2018randomized} and parallel approximation algorithms~\cite{ghaffari2018improved,ghaffari2020massively,koufogiannakis2009distributed}.  Finding a vertex cover is a key substep for many other combinatorial problems and applications~\cite{abu2006scalable,gottlieb2014efficient,park2001effectiveness,sintos2014using,veldt2022correlation}, and many other problems are known to be reducible to or reducible from \vc{} in an approximation preserving way~\cite{garg2004multiway,kenkre2017approximability,klein2016approximability,sintos2014using,veldt2022correlation}. Thus, new algorithmic techniques and hardness results for \vc{} can have far reaching implications for many other problems.

This paper presents a fast and simple 2-approximation algorithm for the minimum weighted \vc{} problem based on growing a maximal independent set. At each iteration, the algorithm samples a node proportional to its weight, adds it to an independent set, and then places all neighboring nodes in the vertex cover. 
This approach highlights several new connections between algorithmic techniques for different problems related to \vc{}. The unweighted version of our algorithm is equivalent to a well-known greedy random method for finding a maximal independent set (MIS), which selects a uniform random ordering of nodes and greedily adds nodes to an independent set~\cite{bennett2016note,blelloch2012greedy,coppersmith1987parallel,fischer2019tight}. 
Although finding maximum independent sets and minimum vertex covers are complementary problems, they are vastly different from the perspective of approximations~\cite{feige1996interactive,zuckerman2006linear}. Furthermore, \emph{maximal} independent sets can be very different from \emph{maximum} independent sets. Despite these differences, our work provides a proof that the greedy random MIS algorithm produces a 2-approximation for \vc{}, and also provides a principled approach for generalizing this MIS algorithm to node-weighted graphs. The analysis of our algorithm also reveals a connection between approximating \vc{} and approximating a problem called \textsc{Correlation Clustering}~\cite{BansalBlumChawla2004}. In particular, the proof of our approximation guarantee is inspired by the analysis of a simple 3-approximation algorithm called \textsc{Pivot}~\cite{ailon2008aggregating}, which we show implicitly approximates a \vc{} problem on a special type of 3-uniform hypergraph. Our results also imply that an existing $O(\log n)$-round parallel algorithm for finding a maximal independent set simultaneously serves as an approximation algorithm for \vc{}, \textsc{Correlation Clustering}, and an edge-labeling problem related to the principle of strong triadic closure~\cite{sintos2014using}.

Finally, we show how to use implicit implementations of our maximal independent set approach to obtain fast and simple approximation algorithms for certain edge-deletion problems that can be reduced to vertex cover in an approximation preserving way. By {implicit implementation}, we mean that the mechanics of our algorithm are applied without forming the reduced instance of \vc{}. For the problems we consider, an implicit implementation of our method can be made asymptotically faster than naively forming the reduced graph or implicitly iterating through all of the edges in the reduced \vc{} instance.
We specifically use our algorithm to develop a simple new combinatorial 2-approximation algorithm for a recent edge-colored hypergraph clustering objective~\cite{amburg2020clustering}, and a faster 2-approximation algorithm for a special case of the \textsc{DAG Edge Deletion} problem~\cite{kenkre2017approximability}.

	\section{Background and Related Work}
Let $G = (V,E)$ denote an undirected graph with $n = |V|$ nodes and $m = |E|$ edges, where each node $v \in V$ is associated with a nonnegative weight $w_v \geq 0$. We use $N(v)$ to denote the set of neighbors of a node $v \in V$. 
When convenient, we will also denote the node set by $V = \{v_1, v_2, \hdots, v_n\}$ and let $w_i$ denote the weight for the $i$th node $v_i$. The goal of the minimum vertex cover problem is to find a set of nodes $S \subseteq E$ that covers all edges and has minimum weight $w(S) = \sum_{v \in S} w_v$.
This can be encoded by the following binary linear program:
	\begin{equation}
	\label{eq:vcblp}
	\begin{array}{ll}
		\min & \displaystyle{\sum_{v\in V}} w_v x_v  \\ 
		\text{s.t.} & x_u + x_v \geq 1 \text{ for $(u,v) \in E$}\\
		& x_v \in \{0,1\} \text{ for $v \in V$}.
	\end{array}
\end{equation}
Before presenting our new algorithm, we survey existing approximation algorithms, previous research on maximal independent sets, and other related work.


\subsection{Approximation Algorithms for Vertex Cover}
The most widely-known 2-approximation algorithm for unweighted \vc{} works by greedily building a maximal matching in $G$ and adding all nodes adjacent to an edge in the matching to a cover. This can be implemented by iterating through edges in an arbitrary order and adding both endpoints of an edge to the cover if the edge can be added to the matching (Algorithm~\ref{alg:match}). This algorithm is attributed both to Gavril and Yannakakis (see~\cite{Garey:1990:CIG:574848} and~\cite{papadimitriou1998combinatorial}). The local-ratio algorithm of Bar-Yehuda and Even~\cite{bar1981linear,bar1985local} (Algorithm~\ref{alg:bye}) can be viewed as a generalization of this algorithm that also works on edge-weighted graphs. Pitt's randomized algorithm~\cite{pitt1985simple} (Algorithm~\ref{alg:pitt}) also iterates through edges, but whenever it encounters an uncovered edge, it samples one of the two endpoints to add to the vertex cover. This strategy is a randomized 2-approximation for weighted \vc{}. Algorithms~\ref{alg:match},~\ref{alg:pitt}, and~\ref{alg:bye} can all be implemented in $O(|E|)$ time. One other way to obtain a 2-approximate \vc{} in the unweighted case is to return the non-leaf nodes of any depth-first tree~\cite{savage1982depth}. This can also be implemented in $O(|E|)$ time, though this method applies only to the unweighted case.

\begin{figure}[t]
	\caption{Algorithms~\ref{alg:match},~\ref{alg:pitt}, and~\ref{alg:bye} are well-known 2-approximation algorithms for \textsc{Vertex Cover}, and Algorithm~\ref{alg:greedymis} is a greedy algorithm for finding a maximal independent set. All run in $O(|E|)$ time. Algorithms~\ref{alg:pitt} and~\ref{alg:bye} apply to node-weighted graphs, whereas Algorithms~\ref{alg:match} and~\ref{alg:greedymis} assume $w_v = 1$ for each $v \in V$. We will prove that \textsc{GreedyMIS} is also a 2-approximation algorithm for unweighted \vc{}, and provide a generalization for node-weighted graphs.}
		\label{fig:all4algs} 
	\begin{minipage}[t]{0.5\linewidth}
		\begin{algorithm}[H]
			\caption{$\textsc{MatchingVC}(G)$}
			\label{alg:match}
			\begin{algorithmic}
				\State $\mathcal{C} \leftarrow \emptyset$  {\tt // Initialize empty cover} 
				\For{$(u,v) \in E$}
				\If{$u \notin \mathcal{C}$ and $v \notin \mathcal{C}$}
				\State {\tt // Add both nodes to cover}
				\State $\mathcal{C} \leftarrow \mathcal{C} \cup \{u,v\}$
				\EndIf
				\EndFor
				\State Return $\mathcal{C}$
			\end{algorithmic}
		\end{algorithm}
		
	\end{minipage} 
	\begin{minipage}[t]{0.5\linewidth}
		\begin{algorithm}[H]
			\caption{$\textsc{PittVC}(G)$}
			\label{alg:pitt}
			\begin{algorithmic}
				\State $\mathcal{C} \leftarrow \emptyset$  {\tt // Initialize empty cover} 
				\For{$(u,v) \in E$}
				\If{$u \notin \mathcal{C}$ and $v \notin \mathcal{C}$}
				\State With prob.\ $\frac{w_v}{w_u + w_v}$: $\mathcal{C} \leftarrow \mathcal{C} \cup \{u\}$
				\State Otherwise: $\mathcal{C} \leftarrow \mathcal{C} \cup \{v\}$
				\EndIf
				\EndFor
				\State Return $\mathcal{C}$
			\end{algorithmic}
		\end{algorithm}

	\end{minipage} 
	\begin{minipage}[b]{0.5\linewidth}
		\begin{algorithm}[H]
	\caption{$\textsc{LocalRatioVC}(G)$}
	\label{alg:bye}
\begin{algorithmic}
	\State $\mathcal{C} \leftarrow \emptyset$  {\tt // Initialize empty cover} 
	\State for each $v \in V$ set $r(v) = w_v$ 
	\For{$(u,v) \in E$}
	\State $M = \min \{r(v), r(u)\}$
	\State $r(v) \leftarrow r(v) - M$
	\State $r(u) \leftarrow r(u) - M$
	\EndFor
	\State $\mathcal{C} = \{v \in V \colon r(v) = 0\}$
	\State Return $\mathcal{C}$
\end{algorithmic}
\end{algorithm}
	\end{minipage}
	\hfill
	\begin{minipage}[b]{0.5\linewidth}
		\begin{algorithm}[H]
			\caption{$\textsc{GreedyMIS}(G)$}
			\label{alg:greedymis}
			\begin{algorithmic}
				\State $\mathcal{I} \leftarrow \emptyset$; $\mathcal{U} \leftarrow V$ 
				\State Generate random uniform node permutation $\sigma$
				\For{$v = v_{\sigma(1)}, v_{\sigma(2)}, \hdots, v_{\sigma(n)}$}
				\If{$v \in \mathcal{U}$}
				\State $\mathcal{I} \leftarrow \mathcal{I} \cup \{v\}$ and $\mathcal{U} \leftarrow \mathcal{U} \backslash \{v \}$
				\For{ $u \in N(v) \cap  \mathcal{U}$}
				\State $ \mathcal{U} \leftarrow  \mathcal{U} \backslash \{u\}$
				\EndFor
				\EndIf
				\EndFor
				\State Return $\mathcal{I}$
			\end{algorithmic}
		\end{algorithm}
	\end{minipage} 
\end{figure}

Several other algorithms achieve a 2-approximation or better for \vc{}, but take longer than $O(|E|)$ time. One approach relies on solving the linear programming (LP) relaxation obtained by replacing the constraint $x_v \in \{0,1\}$ in~\eqref{eq:vcblp} with linear constraints $0 \leq x_v \leq 1$. If $\{x_v^*\}$ denotes an optimal set of dual variables, the set $S = \{v \in V \colon x_v^* \geq 1/2\}$ is a 2-approximate solution for weighted \vc{}. 
Other more sophisticated algorithms have also been developed, including a $2 - \Theta(1/\sqrt{\log n})$ approximation algorithm for a graph with $n$ nodes~\cite{karakostas2009better}, and a $2 - (1 - o(1))\frac{2 \ln \ln \Delta}{\ln \Delta}$  approximation algorithm where $\Delta$ is the maximum degree of the graph~\cite{halperin2002improved}. The latter two algorithms rely on semidefinite programming relaxations. 
In the opposite direction, \emph{list heuristic} algorithms for \vc{}~\cite{avis2007list,delbot2008better} run in $O(|E|)$ time but have approximation factors worse than 2. A list heuristic is an algorithm that iterates through nodes in a fixed order and at each step makes a decision whether to add the current node to the vertex cover or not. These algorithms are designed specifically for unweighted \vc{}. The best known approximation for a list heuristic is $\frac{\sqrt{\Delta}}{2} + \frac32$ where $\Delta$ is the maximum degree~\cite{avis2007list,delbot2008better}.


Among all of these algorithms for \vc{}, Algorithms~\ref{alg:pitt} and~\ref{alg:bye} are unique in that they both achieve a 2-approximation for \emph{weighted} \vc{} in $O(|E|)$ time. Both of these methods rely on iterating through all edges in the graph and deciding whether to add nodes from the edge to the vertex cover. These algorithms could equivalently be described as selecting an arbitrary \emph{uncovered} edge at each iteration, though this requires the algorithm to update the set of covered edges at the end of an iteration. The overall runtime in either case is $O(|E|)$.


\subsection{Finding Maximal Independent Sets}
An independent set in an undirected graph $G = (V,E)$ is a set of nodes in which no two nodes share an edge. Equivalently, a set of nodes $\mathcal{I} \subseteq V$ is an independent set if and only if its complement set $\mathcal{C} = V - \mathcal{I}$ is a vertex cover. The new approximation algorithm we develop for node-weighted \vc{} can in fact be viewed as a generalization of an existing greedy algorithm for finding a maximal independent set (MIS) in an unweighted graph. This \textsc{GreedyMIS} algorithm (Algorithm~\ref{alg:greedymis}) generates a random permutation of nodes and iteratively adds nodes to an independent set. This algorithm and its slight variants date back to roughly the same time period as the earliest \vc{} approximation algorithms~\cite{bennett2016note,coppersmith1987parallel,fischer2019tight,ghaffari2018improved,karp1985fast}. Given the complementary relationship between independent sets and vertex covers, it may at first seem very intuitive to try to approximate \vc{} using a maximal independent set algorithm. However, this simple reasoning overlooks key differences between algorithmic techniques and theoretical guarantees for finding small vertex covers and finding large independent sets. First of all, although finding a \emph{maximum} independent set is equivalent at optimality to finding a minimum vertex cover, these problems are vastly different from the perspective of approximation algorithms, with the former problem being much harder to approximate~\cite{feige1996interactive,zuckerman2006linear}. 
Furthermore, there can be a significant difference between a \emph{maximum} and \emph{maximal} independent set in a graph. As a simple example, consider a star graph on $n$ nodes: the singleton set consisting of the center node in the star is a maximal independent set of size 1, but the the maximum independent set has size $n - 1$.

As a result of these differences, approximating \vc{} and finding a maximal independent set are typically treated as different tasks. Many research papers on finding maximal independent sets do not even mention \vc{}~\cite{alon1986fast,blelloch2012greedy,coppersmith1987parallel,fischer2019tight,karp1985fast,luby1986simple}, while other papers that address both apply different techniques for each problem~\cite{ghaffari2018improved,ghaffari2020massively}. One indirect relationship between these two problems is that any maximal independent set algorithm can be used as a subroutine for approximating \vc{}. If the goal is to approximate \vc{} on a graph $G = (V,E)$, one can first run a MIS algorithm on the \emph{line graph} of $G$. This produces a maximal matching in $G$, which can be combined with Algorithm~\ref{alg:match} to obtain a 2-approximate vertex cover. However, an arbitrary maximal independent set in $G$ provides no guarantees for the \textsc{Vertex Cover} objective in $G$. This can be seen by again considering the maximal independent set consisting of the center node in a star graph.



\subsection{Correlation Clustering and Edge-Deletion Objectives}
Our work builds on connections between \vc{} and \cc{}~\cite{BansalBlumChawla2004}, which is the problem of partitioning an unweighted and undirected graph $G = (V,E)$ into an arbitrary number of clusters in a way that minimizes the number of \emph{mistakes}. There are two types of mistakes: a \emph{positive mistake} is when a pair of adjacent nodes is separated into different clusters, and a \emph{negative mistake} is when two non-adjacent nodes are placed in the same cluster. 
The problem is NP-hard but many approximation algorithms have been developed~\cite{ailon2008aggregating,BansalBlumChawla2004,behnezhad2022almost,cohen2021correlation,cohen2022correlation,ChawlaMakarychevSchrammEtAl2015,veldt2022correlation}. One of the simplest and fastest algorithms is a randomized 3-approximation commonly known as \textsc{Pivot}, which iteratively selects an unclustered node uniformly at random (the \emph{pivot}) and clusters it with all its unclustered neighbors~\cite{ailon2008aggregating}. This is closely related to Algorithm~\ref{alg:greedymis} in that the pivot nodes form a random greedy maximal independent set. This relationship has  also been noted in previous work~\cite{behnezhad2022almost,fischer2019tight}. 

We also draw on connections between \cc{} and an NP-hard edge-labeling problem called \emph{minimum strong triadic closure labeling with edge insertions} (\textsc{MinSTC+}), which is known to be reducible to a special type of hypergraph \vc{} problem~\cite{gruttemeier2022parameterized,gruttemeier2020relation,nuendorf2020,sintos2014using}. Recent work showed how to use \vc{} algorithms as subroutines for \cc{} approximation algorithms~\cite{veldt2022correlation}, though this did not involve new algorithms for the general \vc{} problem. Section~\ref{sec:equiv} expands on these connections between clustering, edge-labeling, and MIS algorithms, and how they relate to our new approximation algorithm for \vc{}. Finally, our algorithmic techniques lead to new approximation algorithms for multiple edge-deletion problems in graphs and hypergraphs, including a recent objective for clustering edge-colored hypergraphs~\cite{amburg2020clustering,veldt2022correlation} and a path-deletion problem in directed acyclic graphs~\cite{kenkre2017approximability}. We cover formal definitions and additional background as needed for these objectives in Section~\ref{sec:faster}.

	\section{The Maximal Independent Set Algorithm for Vertex Cover}

Our main result is a simple algorithm that simultaneously grows a maximal independent set and builds a 2-approximate vertex cover. This algorithm can be seen as a special type of weighted generalization of \textsc{GreedyMIS} (Algorithm~\ref{alg:greedymis}). Our proof that this is a 2-approximation for \vc{} is closely related to the proof that \textsc{Pivot} is a 3-approximation algorithm for \cc{}~\cite{ailon2008aggregating}. We discuss the relationship between these algorithms in more depth in Section~\ref{sec:equiv}.

\begin{algorithm}[t]
	\caption{\textsc{NeighborCover}($G$)}
	\label{alg:neighborcover}
	\begin{algorithmic}[5]
		\State $\mathcal{C} \leftarrow \emptyset$, $\mathcal{I} \leftarrow \emptyset$, $\mathcal{U} \leftarrow V$.
		\While{$\mathcal{U} \neq \emptyset$}
		\State Randomly select $u \in \mathcal{U}$ proportional to $w_u$
		\State $\mathcal{I} \leftarrow \mathcal{I} \cup \{u\}$
		\For{ $v \in N(u) \cap \mathcal{U}$}
		\State $\mathcal{C} \leftarrow \mathcal{C} \cup \{v\}$
		\State $\mathcal{U} \leftarrow \mathcal{U} \backslash \{v\}$
		\EndFor
		\EndWhile
		\State Return $\mathcal{C}$
	\end{algorithmic}
\end{algorithm}

\subsection{Overview and Approximation Guarantee}
Our algorithm for \textsc{Vertex Cover} (Algorithm~\ref{alg:neighborcover}) iteratively grows a cover set $\mathcal{C}$ and an independent set $\mathcal{I}$. During the course of the algorithm, every node that has not yet been added to $\mathcal{I}$ or $\mathcal{C}$ is in an \emph{undecided} node set $\mathcal{U}$. 
At each iteration, the algorithm randomly chooses a node $v \in \mathcal{U}$ proportional to its node weight $w_v$. That node $v$ is added to set $\mathcal{I}$, and all of its undecided neighbors are added to the vertex cover. The algorithm terminates when all nodes are either in $\mathcal{C}$ or $\mathcal{I}$. By design, $\mathcal{I}$ is guaranteed to be a maximal independent set and $\mathcal{C}$ is a vertex cover. We refer to this algorithm as \textsc{NeighborCover}. 

\begin{theorem}
	\label{thm:mainalg}
	\textsc{NeighborCover} is a randomized $2$-approximation algorithm for the minimum weighted \textsc{Vertex-Cover} problem.
\end{theorem}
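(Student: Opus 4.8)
The plan is to compare the expected weight of the returned cover directly with the value of the linear programming relaxation of~\eqref{eq:vcblp}, which lower-bounds $\mathrm{OPT}$, and to carry out the comparison by induction on $|V|$ using the recursive structure of \textsc{NeighborCover}. Feasibility is not an issue: $\mathcal{I}$ is an independent set and every node is ultimately placed in $\mathcal{C}$ or $\mathcal{I}$, so $\mathcal{C} = V\setminus\mathcal{I}$ covers every edge. Thus the whole content is the inequality $\mathbb{E}[w(\mathcal{C})]\le 2\,\mathrm{OPT}$. Fix an optimal solution $x^\star$ of the relaxation in which the constraint $x_v\in\{0,1\}$ is replaced by $0\le x_v\le 1$, put $\mathrm{LP}(G)=\sum_{v} w_v x^\star_v\le\mathrm{OPT}(G)$, and I will establish the stronger bound $\mathbb{E}[w(\mathcal{C})]\le 2\,\mathrm{LP}(G)$. (I will assume all weights are positive; zero-weight nodes can be moved into $\mathcal{C}$ at the outset at no cost and deleted from the graph.)

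Let $f(G)$ be the expected weight of the cover \textsc{NeighborCover} returns on $G$. Conditioning on the first pivot $p$, which is selected with probability $w_p/w(V)$, the algorithm adds all of $N(p)$ to $\mathcal{C}$ and then runs exactly as \textsc{NeighborCover} on the induced subgraph $G_p := G[\,V\setminus(\{p\}\cup N(p))\,]$, so
\[
  f(G)\;=\;\sum_{p\in V}\frac{w_p}{w(V)}\Bigl(w(N(p))+f(G_p)\Bigr).
\]
Since $G_p$ has fewer nodes and $x^\star$ restricted to $V(G_p)$ remains feasible for $G_p$, induction gives
\[
  f(G_p)\;\le\;2\,\mathrm{LP}(G_p)\;\le\;2\sum_{v\notin\{p\}\cup N(p)} w_v x^\star_v\;=\;2\,\mathrm{LP}(G) - 2w_p x^\star_p - 2\sum_{v\in N(p)} w_v x^\star_v .
\]
Substituting into the recursion and regrouping,
\[
  f(G)\;\le\;2\,\mathrm{LP}(G)\;+\;\frac{1}{w(V)}\sum_{p\in V} w_p\Bigl(w(N(p)) - 2w_p x^\star_p - 2\sum_{v\in N(p)} w_v x^\star_v\Bigr),
\]
and it remains only to check that this correction term is $\le 0$. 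Counting each edge from both endpoints, $\sum_p w_p\,w(N(p)) = 2\sum_{\{p,v\}\in E} w_p w_v$ and $\sum_p w_p\sum_{v\in N(p)} w_v x^\star_v = \sum_{\{p,v\}\in E} w_p w_v(x^\star_p + x^\star_v)\ge\sum_{\{p,v\}\in E} w_p w_v$, the inequality being exactly the constraint $x^\star_p + x^\star_v\ge 1$. Hence the correction term is at most $2\sum_{\{p,v\}\in E} w_p w_v - 2\sum_p w_p^2 x^\star_p - 2\sum_{\{p,v\}\in E} w_p w_v = -2\sum_p w_p^2 x^\star_p\le 0$, which closes the induction (the base case $V=\emptyset$ being trivial).

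The step that needs care — and the reason a naive induction against $\mathrm{OPT}$ rather than $\mathrm{LP}$ fails — is that every pivot step incurs an immediate overhead $w(N(p))$ that must be paid for out of the fractional mass the optimum assigns to $p$ and to $N(p)$; it is precisely the factor $2$ together with $x^\star_p + x^\star_v\ge 1$ that makes this balance, and for the same reason the method cannot beat $2$, consistent with the relaxation having integrality gap $2$. Everything else is routine: verifying the recursion for $f$, the two edge-counting identities, and the zero-weight bookkeeping. An alternative, closer to the paper's stated motivation, would be to imitate the \textsc{Pivot} analysis for \cc{} by charging the cover weight accrued at each iteration against a randomized, dual-feasible edge assignment $\{y_e\}$ built on the fly so that the accrued weight is at most $2\sum_e y_e$ in expectation; but the inductive LP argument above is the cleanest way I see to pin down where the constant $2$ comes from.
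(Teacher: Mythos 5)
Your proof is correct, but it takes a genuinely different route from the paper's. The paper argues via LP duality and a charging scheme: the weight of each covered node is charged to the edge responsible for covering it, the event $A_e$ with probability $p_e$ is introduced, and the quantities $\hat{y}_e = \frac{w_u w_v}{w_u + w_v}p_e$ are shown to be feasible for the dual LP (a fractional matching) with $\sum_{e} \hat{y}_e = \frac12 \mathbb{E}[C]$, so the guarantee follows from weak duality. You instead condition on the first pivot $p$, observe that the rest of the run is exactly \textsc{NeighborCover} on the induced subgraph $G_p$, and induct against a fixed optimal primal LP solution $x^\star$, using that its restriction remains feasible on induced subgraphs; the factor $2$ emerges from the edge constraints $x^\star_p + x^\star_v \ge 1$ in the double-counting step, and your algebra there checks out (the recursion for $f$, the two edge-counting identities, and the sign of the correction term are all right). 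Both arguments ultimately bound the expected cost by twice the LP value, so neither gives a stronger guarantee; what the paper's dual construction buys is an explicit per-instance certificate (a fractional matching of value exactly half the expected cost), which is precisely what powers the analogy with the \textsc{Pivot} analysis and the fractional open-wedge packings discussed in Section~\ref{sec:equiv}, whereas your induction is more elementary and self-contained, never invoking the dual program. One small bookkeeping point: the zero-weight preprocessing is harmless, but since the theorem is about \textsc{NeighborCover} itself rather than a preprocessed variant, you should add the one-line observation that a zero-weight node is never chosen as a pivot while any positive-weight node is still undecided, so such nodes contribute no cost and do not affect which positive-weight nodes end up covered; with that remark, restricting to positive weights is fully justified.
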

\begin{proof}
	The linear programming relaxation for \vc{} is given by
	\begin{equation}
		\label{eq:vclp}
		\begin{array}{ll}
			\min & \displaystyle{\sum_{v\in V}} w_v x_v  \\ 
			\text{s.t.} & x_u + x_v \geq 1 \text{ for $(u,v) \in E$}\\
			& x_v \geq 0 \text{ for $v \in V$}
		\end{array}
	\end{equation}
	where there is variable $x_v$ for each node $v \in V$ and a constraint for each edge. The dual of this relaxation the following linear program:
	\begin{equation}
		\label{eq:vclpdual}
		\begin{array}{rl}
			\max & \displaystyle{\sum_{e \in E}} y_e  \\ 
			\text{s.t. for each $u \in V$:} & \sum_{e : u \in e} y_e \leq w_u  \\
			& y_e \geq 0 \text{ for $e \in E$.}
		\end{array}
	\end{equation}
	When $w_v = 1$ for every $v \in E$, the solution to the dual linear program is the largest fractional edge matching. By LP duality theory, every feasible solution to the dual LP is a lower bound for the \vc{} instance. We will show how to construct a feasible solution whose value is half the expected cost of \textsc{NeighborCover}, proving the 2-approximation.
	
	\paragraph{Expected cost of the algorithm.} If $v \in V$ is added to $\mathcal{I}$ by \textsc{NeighborCover}, we will refer to it as a \emph{MIS-node}. If a node $u \in V$ is never chosen as a MIS-node, this means that the algorithm eventually places $u$ in the cover $\mathcal{C}$, incurring a cost of $w_u$. This means that some node $v$ adjacent to $u$ was chosen as a MIS-node in some iteration, so we will charge the cost $w_u$ to the edge $(u,v) \in E$. 
	For an edge $e \in E$, let $A_e$ denote the event that one of the two nodes in $e$ is chosen as a MIS-node in an iteration where both are still undecided, and let $p_e = \pr{A_e}$.
	An edge $e  = (u,v)$ receives a charge if and only if $A_e$ occurs, and it can only receive a charge once. Conditioned on $A_e$ being true, the charge assigned to $e$ depends on whether $u$ or $v$ is chosen as a MIS-node. With probability $w_u/(w_u + w_v)$, node $u$ is chosen as a MIS-node, meaning that node $v$ is placed in the vertex cover and $e$ is charged cost $w_v$. With probability $w_v/(w_u + w_v)$, $u$ is placed in the vertex cover and the charge is $w_u$. If we let $X_e$ be a random variable denoting the charge to edge $e$, then $C = \sum_{e \in E} X_e$ is the total cost incurred by \textsc{NeighborCover} and has the following expected value:
	\begin{align*}
		\mathbb{E}\left[ C \right] &= \sum_{e \in E} \mathbb{E}\left[ X_e \right] = \sum_{e \in E} \mathbb{E}\left[ X_e \mid A_e \right] \pr{A_e} \\
		&=\sum_{e = (u,v) \in E} \left( w_v \cdot \frac{w_u}{w_u+ w_v} + w_u \cdot \frac{w_v}{w_u+ w_v} \right) p_e =  \sum_{e = (u,v) \in E}  \frac{2w_uw_v}{w_u+w_v}p_e.
	\end{align*}
	
	\paragraph{Lower bound.} 
	For a node $u \in V$, let $B_u$ be the event that node $u$ enters the vertex cover $\mathcal{C}$ at some point during the algorithm. For every edge $e = (u,v)$, we have
	\begin{equation}
		\pr{B_u \land A_e} = \prc{B_u}{A_e} \cdot \pr{A_e} = \frac{w_v}{w_u + w_v} \cdot p_e.
	\end{equation}
	Observe now that the node cost $w_u$ can be charged to only one edge $(v,u)$ incident to $u$. This means that for two different edges $e$ and $f$ that share node $u$, the events $B_u \land A_{e}$ and $B_u \land A_{f}$ are disjoint, and more generally we know that for an arbitrary node $u \in V$,
	\begin{equation}
		\label{eq:probsum}
		\sum_{v\colon e = (u,v) \in E} \frac{w_v}{w_u + w_v} p_e = \sum_{e\colon u \in e} \pr{B_u \land A_e} \leq 1.
	\end{equation}
	For each $e = (u,v) \in E$, define a variable $\hat{y}_e = \frac{w_v w_u}{w_u + w_v}p_e$. By~\eqref{eq:probsum}, for every $u \in V$ we have
	\begin{equation}
		\label{eq:constraintsatisfied}
		\sum_{e \colon u \in e} \hat{y}_e = \sum_{v\colon e = (u,v) \in E} \frac{w_v w_u}{w_u + w_v} p_e \leq w_u,
	\end{equation}
	so the variables $\{\hat{y}_e\}_{e \in E}$ satisfy the constraints of the dual LP~\eqref{eq:vclpdual} and we can see that half the expected cost of the algorithm is a lower bound on the \textsc{Vertex-Cover} instance:
	\begin{equation}
		\label{eq:lowerboundvc}
		\sum_{e \in E} \hat{y}_e = \sum_{(u,v) \in E} \frac{w_v w_u}{w_u + w_v}p_e = \frac{1}{2} \mathbb{E}[{C}].
	\end{equation}
\end{proof}
This result immediately implies that \textsc{GreedyMIS} is a 2-approximation for unweighted \vc{}. 
Theorem~\ref{thm:mainalg} can also be viewed as an improved theoretical result for list heuristic algorithms for \vc{}~\cite{avis2007list,delbot2008better}. In particular, the \textsc{ListRight} algorithm~\cite{delbot2008better} is nearly identical to \textsc{GreedyMIS}, and only differs in that the node ordering is given rather than chosen uniformly at random. The best previous approximation factor for this method is $\frac{\sqrt{\Delta}}{2} + \frac32$, where $\Delta$ is the maximum degree, which is obtained by ordering vertices by degree. Our result shows that a random ordering provides an expected 2-approximation. We summarize these observations as a corollary.
\begin{corollary}
	\textsc{GreedyMIS} (Algorithm~\ref{alg:greedymis}) is equivalent to applying \textsc{ListRight}~\cite{delbot2008better} with a uniform random node ordering, and is a randomized 2-approximation for unweighted \vc{}.
\end{corollary}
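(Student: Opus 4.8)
The plan is to prove the corollary by directly connecting the three objects: \textsc{GreedyMIS}, \textsc{ListRight} under a random ordering, and the weighted algorithm \textsc{NeighborCover} with unit weights. First I would observe that the corollary has two assertions: (i) \textsc{GreedyMIS} coincides with \textsc{ListRight} run on a uniformly random permutation, and (ii) it is a randomized $2$-approximation for unweighted \vc{}. The second part is essentially immediate from Theorem~\ref{thm:mainalg}, so the only real content is spelling out (i) and making the reduction from "independent set" output to "vertex cover" output explicit.

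For (i), I would recall the definition of \textsc{ListRight}: it iterates through the nodes in a fixed order $v_1, v_2, \dots, v_n$ and adds $v_i$ to the vertex cover unless every later-in-the-order neighbor of $v_i$ has already been placed in the cover — equivalently, it adds $v_i$ to the independent set precisely when none of its neighbors appearing after it in the list has already been "forced" into the cover. I would then argue by induction on the position in the permutation that the independent set $\mathcal{I}$ produced by \textsc{GreedyMIS} on permutation $\sigma$ is exactly the set of nodes \textsc{ListRight} (on the same ordering) leaves out of its cover. The key invariant is that a node $v$ is still \emph{undecided} when \textsc{GreedyMIS} reaches it in $\sigma$ if and only if no neighbor of $v$ earlier in $\sigma$ was selected into $\mathcal{I}$; and a node is added to $\mathcal{I}$ iff it is undecided at its turn. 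This matches the \textsc{ListRight} rule for the reversed ordering, which is precisely why the name has "Right" in it — I should be careful to state whether the correspondence uses $\sigma$ or its reverse, since \textsc{ListRight}'s convention scans one way but "protects" a node based on its right-neighbors. Either way, since $\sigma$ is a uniform random permutation, its reverse is too, so the distributional claim "uniform random node ordering" is unaffected.

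For (ii), I would note that \textsc{NeighborCover} with $w_v = 1$ for all $v$ samples each undecided node uniformly at random at each step, which is exactly the distribution induced by processing a uniform random permutation and skipping already-decided nodes — hence \textsc{NeighborCover} with unit weights and \textsc{GreedyMIS} produce the same distribution over $(\mathcal{I}, \mathcal{C})$ pairs, with $\mathcal{C} = V \setminus \mathcal{I}$. Here I would invoke the standard fact that sequentially drawing uniform samples without replacement from $V$ is equivalent to fixing a uniform random permutation up front. Then the $2$-approximation for the cover $\mathcal{C}$ returned by \textsc{NeighborCover} (Theorem~\ref{thm:mainalg}, specialized to unit weights, where the dual LP value is a fractional matching lower bound) transfers verbatim to the cover $V \setminus \mathcal{I}$ output implicitly by \textsc{GreedyMIS}.

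The main obstacle is not any deep argument but rather pinning down the \textsc{ListRight} bookkeeping precisely enough that the equivalence is genuinely an equality of algorithms, not just "morally the same." In particular I must get the orientation right (list order versus reversed list order, and whether a node is protected by its left- or right-neighbors), confirm that \textsc{GreedyMIS}'s lazy update of $\mathcal{U}$ — it only removes neighbors at the moment a node is selected — produces the same decisions as \textsc{ListRight}'s single forward pass, and make sure the "uniform random ordering" is preserved under whatever reversal is needed. Once that is settled, the approximation claim is a one-line consequence of Theorem~\ref{thm:mainalg}, and I would keep the write-up short, essentially: state the \textsc{ListRight} rule, give the inductive invariant, note the distributional identification with unit-weight \textsc{NeighborCover}, and conclude.
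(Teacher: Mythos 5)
Your plan follows the paper's own (unproved, one-paragraph) reasoning: part~(ii) is Theorem~\ref{thm:mainalg} specialized to unit weights, together with the observation that sampling an undecided node uniformly at each step is distributionally the same as fixing a uniform random permutation up front and skipping decided nodes; part~(i) is a bookkeeping exercise matching \textsc{GreedyMIS}'s scan to \textsc{ListRight}'s scan of the reversed list. Those are exactly the right moves, and you correctly note that reversing a uniform random permutation preserves uniformity.

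One slip worth flagging, in precisely the place you yourself marked as needing care: your two descriptions of the \textsc{ListRight} rule are not in fact equivalent. You first say (correctly) that $v_i$ is added to the cover \emph{unless every} right-neighbor is already in the cover, i.e.\ $v_i$ goes to the independent set iff \emph{all} of its right-neighbors are in the cover. You then restate this as $v_i$ goes to the independent set iff \emph{none} of its right-neighbors is in the cover, which is the wrong negation and describes a different algorithm --- the two already disagree on the second vertex visited whenever it neighbors the first. Fortunately the invariant you actually propose to prove by induction (``$v$ is still in $\mathcal{U}$ when \textsc{GreedyMIS} reaches it iff no earlier-visited neighbor entered $\mathcal{I}$,'' and under the identification $\mathcal{I}=$ free set this coincides with ``all right-neighbors are in the cover'') uses the correct form, so the argument goes through once the ``equivalently'' clause is repaired.
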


\subsection{Runtime Guarantees and Implementation}
When the graph $G = (V,E)$ is unweighted, \textsc{NeighborCover} can be implemented by first generating a uniform random permutation to determine the order in which to visit nodes. If the $i$th node that is visited is undecided, it is added to the independent set, otherwise it is a vertex cover node and the algorithm continues to the next step. The random permutation can be generated in $O(|V|)$ time (e.g., using the Fisher-Yates shuffle), so the runtime for the unweighted version is $O(|E|)$. 

If we assume the nodes have arbitrary nonnegative weights, the implementation and runtime analysis is made more challenging by the node sampling procedure. In particular, sampling a node based on its weight from among all undecided nodes is more involved than the random sampling procedure in Pitt's algorithm (Algorithm~\ref{alg:pitt}), which only requires sampling one of two nodes in an edge. A naive sampling procedure would take $O(|V|)$ time each round, which would lead to an overall runtime of $O(|V|^2 + |E|)$, since we must sample a node in each of $O(|V|)$ iterations. With a more careful implementation we can achieve a runtime of $O(|V| \log |V| + |E|)$. One simple way to achieve this is to use the implementation in Algorithm~\ref{alg:nc2}, which decouples the random sampling strategy from the procedure of growing a maximal independent set. Algorithm~\ref{alg:randperm} is used to generate a permutation of \emph{all} nodes based on their weights, and can be implemented in $O(|V| \log |V|)$ time~\cite{wong1980efficient}. In iteration $i$, Algorithm~\ref{alg:nc2} may visit a node that is decided already, but in this case the node will simply be ignored, so that the selection of the next independent set node follows the same sampling distribution.


\begin{algorithm}[t]
	\caption{$\textsc{WeightedShuffle}(\{w_1, w_2, \hdots , w_n\})$}
		\label{alg:randperm}
		\begin{algorithmic}
			\State $U = \{1,2, \hdots , n\}$ 
			\For{$i = 1$ to $n$}
			\State {\tt{Sample $t \in U$ proportional to $w_t$}}
			\State $\sigma(i) = v$; $U \leftarrow U\backslash \{t\}$
			\EndFor
			\State Return $\sigma$
		\end{algorithmic}
\end{algorithm}

\begin{algorithm}[t]
\caption{$\textsc{NeighborCover}(G)$}
\label{alg:nc2}
		\begin{algorithmic}
			\State $\mathcal{C} \leftarrow \emptyset$, $\mathcal{I} \leftarrow \emptyset$
			\State $\sigma = \textsc{WeightedShuffle}(\{w_1, w_2, \hdots , w_n\})$
			\For{$v = v_{\sigma(1)}, v_{\sigma(2)}, \hdots, v_{\sigma(n)}$}
			\State {\tt // Check if $v$ has to be covered}
			\For{$u \in N(v)$}
			\If{$u \in \mathcal{I}$}
			\State $\mathcal{C} \leftarrow \mathcal{C} \cup \{v\}$
			\State {\bf break}
			\EndIf
			\EndFor
			\State {\tt // If not, add $v$ to MIS}
			\If{$v \notin \mathcal{C}$}
			\State $\mathcal{I}  \leftarrow \mathcal{I}  \cup \{v\}$
			\EndIf
			\EndFor		
			\State Return $\mathcal{C}$
		\end{algorithmic}
	\end{algorithm}


The $O(|V| \log |V|)$ term in the runtime is one {disadvantage} of \textsc{NeighborCover} relative to edge-visiting algorithms that run in linear time even in the weighted case (e.g., Algorithms~\ref{alg:pitt} and~\ref{alg:bye}). Nevertheless, the runtime is still $O(|E|)$ when the graph is unweighted and whenever $|E| = \Omega(|V| \log |V|)$. As we shall see later, one advantage of \textsc{NeighborCover} is that it leads to several particularly simple approximation algorithms for certain edge-deletion problems that can be reduced to \vc{} in an approximation preserving way. These implicit implementations can easily be made more efficient than applying a naive approach that relies on explicitly forming the reduced instance of \vc{}.

	\section{Algorithm Equivalence Results}
\label{sec:equiv}
Theorem~\ref{thm:mainalg} shows for the first time that \textsc{GreedyMIS} (Algorithm~\ref{alg:greedymis}) is an expected 2-approximation for unweighted \textsc{Vertex Cover}, and provides a principled new way to generalize this method to node-weighted graphs. Our method is also related to \cc{}~\cite{BansalBlumChawla2004} and strong triadic closure edge-labeling problems~\cite{sintos2014using}. In particular, 
the proof of Theorem~\ref{thm:mainalg} is inspired by the analysis of the 3-approximate \textsc{Pivot} algorithm for \cc{}~\cite{ailon2008aggregating}. In this section we highlight two separate ways in which \textsc{NeighborCover} and \textsc{Pivot} are related. We also discuss a simple existing parallelization scheme for \textsc{GreedyMIS} which, based on our equivalence results, can be viewed as an approximation algorithm for several different problems at once. 

Several of the connections and equivalences highlighted in this section are already present in some form in previous literature, though not all in one place. We bring these connections together to highlight how our new approximation algorithm for \textsc{Vertex Cover} relates to algorithmic techniques for other problems. These connections also lay the groundwork for several open directions for future research that we discuss at the end of the paper.

\subsection{\cc{} and 3-uniform Hypergraph \vc{}}
Many approximation algorithms of \cc{} are based on counting open wedges (also called \emph{bad triangles} or  \emph{bad triplets})~\cite{ailon2008aggregating,BansalBlumChawla2004,behnezhad2022almost,veldt2022correlation}. An open wedge in $G$ is a set of three nodes whose induced subgraph contains only two edges. Every way of clustering these nodes leads to at least one disagreement: either all nodes will be placed in the same cluster (producing a negative mistake) or two adjacent nodes will be separated (producing a positive mistake). 
Letting $\mathcal{W}$ denote the set of open wedges in $G$, the following binary linear program provides a lower bound for the optimal \cc{} objective:
\begin{equation}
	\label{eq:stcblp}
	\begin{array}{ll}
		\min & \displaystyle{\sum_{\{i, j\} \in {V \choose 2}}} x_{ij} \\ 
		\text{s.t.} & x_{ij} + x_{ik} + x_{jk} \geq 1 \text{ for $\{i,j,k\}\in \mathcal{W}$}\\
		& x_{ij} \in \{0,1\} \text{ for $\{i, j\}  \in {V \choose 2}$}.
	\end{array}
\end{equation}
The constraint $x_{ij} + x_{ik} + x_{jk} \geq 1$ reflects that fact that there will be at least one mistake among the node pairs in the open wedge $\{i,j,k\}$. There is a close relationship between this binary program and the binary program for \vc{} in~\eqref{eq:vcblp}. Instead of variables for nodes, there is a variable for each node pair, and instead of the constraint $x_u + x_v \geq 1$ we have $x_{ij} + x_{ik} + x_{jk} \geq 1$. Problem~\eqref{eq:stcblp} in fact encodes a \vc{} problem in a 3-uniform \emph{open wedge hypergraph} $\mathcal{H} = (\mathcal{V}, \mathcal{E})$ constructed from the original graph $G = (V,E)$ as follows:
\begin{itemize}
	\item For each node pair $\{i,j\} \in {V \choose 2}$, define an edge $v_{ij} \in \mathcal{V}$. 
	\item For each open wedge $\{i,j,k\} \subseteq\mathcal{W}$, define a hyperedge $\{v_{ij}, v_{ik}, v_{jk}\} \in \mathcal{E}$.
\end{itemize}
Figure~\ref{fig:cc2vc} provides an illustration of this reduction. Every clustering of nodes in $G = (V,E)$ can be mapped to a vertex cover in $\mathcal{H}$: if the clustering makes a mistake at node pair $\{i,j\}$, this means node $v_{ij}$ is covered in the hypergraph. However, the reverse is not necessarily true, and it is not hard to come up with simple examples where a vertex cover in $\mathcal{H}$ does not translate to a node clustering in $G$ (see Figure~\ref{fig:cc2vc}).
\begin{figure}[t]
	\centering
	\includegraphics[width = .6\linewidth]{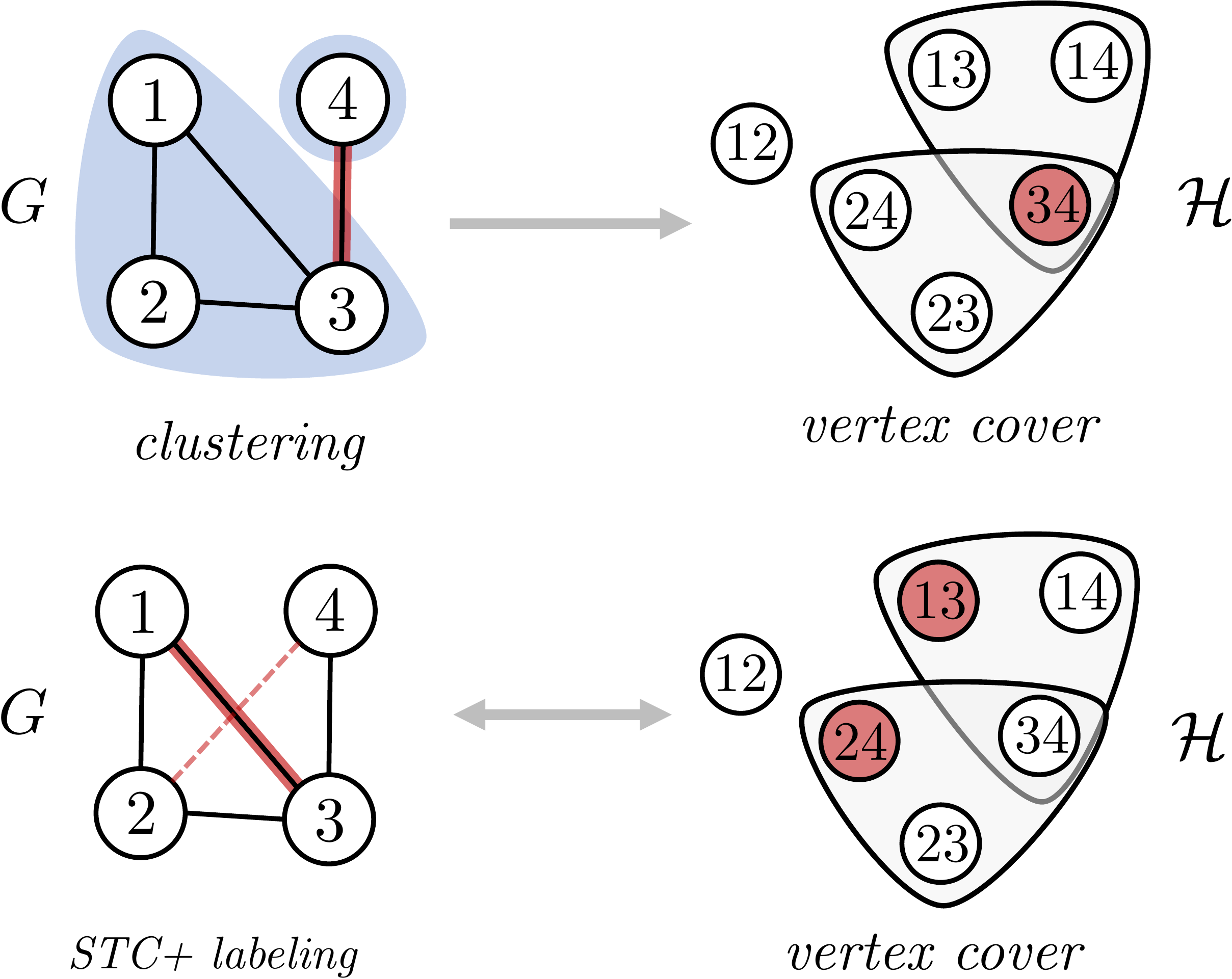}
	\caption{The open wedge hypergraph $\mathcal{H}$ of a graph $G = (V,E)$ is obtained by introducing a node $\Circled{ij}$ for each node pair $\{i,j\}$ in $G$. (Top row) A clustering of the nodes in $G$ always maps to a vertex cover in $\mathcal{H}$. In the example, the only ``mistake'' is to separate nodes 3 and 4, hence node $\Circled{34}$ is a vertex cover in $\mathcal{H}$. (Bottom row) Vertex covers in $\mathcal{H}$ do not always correspond to clusterings, but are in one-to-one correspondence with STC+ labelings. Covering nodes $\Circled{13}$ and $\Circled{24}$ means labeling $(1,3)$ as \emph{weak} and introducing a new (\emph{weak}) edge $(2,4)$ in $G$. All open wedges in $G$ now satisfy strong triadic closure (at least one edge in each open wedge is \emph{weak}).}
	\label{fig:cc2vc}
\end{figure}

This special 3-uniform \vc{} problem 
 is equivalent to an NP-hard edge \emph{labeling} problem that is based on the principle of \emph{strong triadic closure}~\cite{sintos2014using,veldt2022correlation}.
Strong triadic closure (STC) posits that two individuals in a social network will share at least a weak connection to one another if they both share strong ties to a mutual friend (see chapter 3 in~\cite{easley2010networks}). This principle can be related back to open wedges in the graph $G = (V,E)$. If $\{u,v,w\} \in \mathcal{W}$ and $(v, w) \notin E$, this means that $v$ and $w$ have a mutual ``friend'' (node $u$) though they do not share an edge. Strong triadic closure suggests that one of the following must be true: (1) $(u,v)$ is a weak tie, (2) $(u,w)$ is a weak tie, or (3) nodes $v$ and $w$ actually do share at least a weak tie but the graph $G$ simply has a ``missing'' edge. An \textsc{STC+} labeling\footnote{The `+' in STC+ indicates that edge additions are allowed. Sintos and Tsasparas~\cite{sintos2014using} also considered a version that only involved labeling existing edges as weak or strong.} for graph $G = (V,E)$ is defined to be a set of edges $E_W \subseteq E$ to label as \emph{weak} along with a set of node pairs $E_N \subseteq {V \choose 2} - E$ to turn into new \emph{weak} edges, in order to ensure that strong triadic closure holds. In other words, for an open wedge $\{u,v,w\}$ centered at $u$, either $(u,v)$ or $(u,w)$ is labeled as weak, or the non-adjacent pair $\{v,w\}$ is added to $E_N$. The \textsc{MinSTC+} problem is the task of finding an \textsc{STC+} labeling that minimizes $|E_N| + |E_W|$.
 The equivalence between \textsc{MinSTC+} and a special type of 3-uniform  \vc{} problem was noted when this edge-labeling problem was first introduced~\cite{sintos2014using}. Further connections between \textsc{MinSTC+} and \cc{} were explored in subsequent work~\cite{gruttemeier2022parameterized,gruttemeier2020relation,nuendorf2020, veldt2022correlation}. These connections provide the foundation for understanding the relationship between \textsc{NeighborCover} and the \textsc{Pivot} approximation algorithm for \cc{}.

\subsection{\textsc{Pivot} as a Hypergraph Vertex Cover Algorithm}
The \textsc{Pivot} algorithm for \cc{} selects an unclustered node uniformly at random (the \emph{pivot} node) in each iteration, and clusters it with all its unclustered neighbors. This is repeated until all nodes are clustered. Ailon, Charikar, and Newman~\cite{ailon2008aggregating} proved that this algorithm provides a 3-approximation for \cc{} by considering the linear programming relaxation of objective~\eqref{eq:stcblp}. These authors showed that the expected cost of \textsc{Pivot} can be bounded below by constructing an implicit feasible solution for the dual linear program, which encodes the notion of a fractional open wedge packing. An open wedge packing is a node-pair-disjoint set of open wedges in $G$, which provides a lower bound for \cc{} since at least one mistake must be made at each disjoint open wedge. The dual LP encodes \emph{fractional} packings in the sense that each node pair is allowed to partially contribute to multiple open wedges as long as the sum of contributions is at most 1.

The connection between \cc{} and 3-uniform \vc{} was not explicitly noted in the work of Ailon, Charikar, and Newman~\cite{ailon2008aggregating}, but this relationship sheds light on why the analysis for \textsc{Pivot} can be adapted to prove \textsc{NeighborCover} is a 2-approximation for \vc{}. In particular, the fractional open wedge packing that \textsc{Pivot} relies on corresponds to a fractional matching in the open wedge hypergraph, just as \textsc{NeighborCover} relies on a fractional matching lower bound in a graph. We formalize the relationship with a simple lemma that follows quickly from previous observations, but has not been explicitly noted elsewhere in the literature.
\begin{lemma}
	\label{lem:pivot}
	\textsc{Pivot} is a 3-approximation algorithm for \textsc{MinSTC+}. Equivalently, \textsc{Pivot} is a 3-approximation algorithm for the problem of finding a minimum vertex cover in the open wedge hypergraph of a graph.
\end{lemma}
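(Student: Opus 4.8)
The plan is to show that running \textsc{Pivot} on $G = (V,E)$ implicitly produces a vertex cover in the open wedge hypergraph $\mathcal{H} = (\mathcal{V},\mathcal{E})$ whose expected size is within a factor of $3$ of the optimal hypergraph cover, and that this hypergraph cover corresponds exactly to a valid \textsc{STC+} labeling of $G$. Since the statement asserts the two claims are equivalent, I first nail down the correspondence: given the clustering that \textsc{Pivot} returns, define, for each node pair $\{i,j\}$, the variable $x_{ij} = 1$ exactly when the clustering makes a mistake at $\{i,j\}$ (a separated edge or a merged non-edge). As already observed in the excerpt, every clustering maps to a feasible cover of $\mathcal{H}$: for each open wedge $\{i,j,k\}$ at least one of the three pairs is a mistake, so $x_{ij}+x_{ik}+x_{jk}\ge 1$. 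Moreover, this cover gives an \textsc{STC+} labeling by labeling each separated edge \emph{weak} and inserting each merged non-edge as a new \emph{weak} edge; the open-wedge constraints are precisely the strong triadic closure conditions, so the labeling is valid and its cost $|E_W| + |E_N|$ equals the number of covered nodes in $\mathcal{H}$. The equivalence of ``$3$-approximation for \textsc{MinSTC+}'' and ``$3$-approximation for hypergraph \vc{} on $\mathcal{H}$'' then follows because \textsc{MinSTC+} and hypergraph \vc{} on $\mathcal{H}$ have the same optimum (a known fact cited in the excerpt) and \textsc{Pivot}'s output cost coincides under the bijection above.

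The analytic core is the $3$-approximation bound itself, and here I would follow the structure of the proof of Theorem~\ref{thm:mainalg} almost verbatim, transplanted to $\mathcal{H}$. The LP relaxation of~\eqref{eq:stcblp} has a dual that maximizes $\sum_{\{i,j,k\}\in\mathcal{W}} z_{ijk}$ subject to $\sum_{\{i,j,k\}\ni\{i,j\}} z_{ijk} \le 1$ for every node pair $\{i,j\}$ — a fractional open-wedge (hyperedge) packing, exactly analogous to the fractional matching dual~\eqref{eq:vclpdual}. The charging argument runs as follows: whenever \textsc{Pivot} picks a pivot $u$ and a mistake is thereby committed at pair $\{i,j\}$, there is an open wedge $\{i,j,k\}$ with $u\in\{i,j,k\}$ that ``witnesses'' this mistake at the moment $u$ is pivoted (the standard bad-triplet analysis of Ailon--Charikar--Newman), and we charge the cost of that mistake to $\{i,j,k\}$. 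For a fixed bad triplet $t = \{i,j,k\}$, let $A_t$ be the event that one of its three nodes is pivoted while all three are still unclustered, with $p_t = \Pr[A_t]$; conditioned on $A_t$, each of the three is equally likely to be the pivot (uniform sampling), producing in expectation a bounded number of charged mistakes. One then sets $\hat z_t$ proportional to $p_t$, checks via a disjointness argument — the cost at a given node pair is charged to at most one bad triplet per pivot step, so the total contribution through any fixed pair is at most $1$ — that $\{\hat z_t\}$ is dual-feasible, and reads off that the dual objective equals $\tfrac13$ times the expected cost of \textsc{Pivot}. LP duality then gives the $3$-approximation.

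I expect the main obstacle to be the bookkeeping in the charging step, specifically verifying the dual-feasibility constraint $\sum_{t \ni \{i,j\}} \hat z_t \le 1$: one must argue carefully that, over the whole run, the mistake at a fixed node pair $\{i,j\}$ is attributed to a unique bad triplet (the one whose third node is the pivot responsible), so that the events $A_t \wedge (\text{$\{i,j\}$ is a mistake charged to $t$})$, ranging over triplets $t$ containing $\{i,j\}$, are disjoint — mirroring the role of~\eqref{eq:probsum} in the \vc{} proof. The other point requiring a little care is the three-way case analysis at a pivoted triplet (separated-edge versus merged-non-edge mistakes give slightly different per-pair charges), but since all three nodes of a bad triplet are equally likely to be the pivot and each pivoting outcome commits at most the mistakes in $t$, this reduces to the same arithmetic that yields the constant $3$ in~\cite{ailon2008aggregating}. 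Because all of these ingredients are either explicitly present in the \textsc{Pivot} analysis or are immediate translations of the Theorem~\ref{thm:mainalg} argument into the hypergraph $\mathcal{H}$, the lemma follows quickly once the correspondence in the first paragraph is set up.
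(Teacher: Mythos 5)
Your proposal is correct and follows the same logical skeleton as the paper's proof: (1) the LP relaxation of~\eqref{eq:stcblp} lower bounds both \cc{} and \textsc{MinSTC+}, (2) \textsc{Pivot}'s expected cost is at most $3$ times that LP optimum, and (3) every clustering maps to a feasible \textsc{STC+} labeling whose cost equals the number of mistakes, so the bound transfers. The one place you diverge is in step (2): the paper simply cites the Ailon--Charikar--Newman analysis as a black box (``the original analysis of \textsc{Pivot} shows the expected cost is at most $3$ times the LP optimum''), whereas you re-derive that bound by transplanting the dual-fitting / charging argument from Theorem~\ref{thm:mainalg} into the open wedge hypergraph. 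That re-derivation is not wrong --- it is exactly what the ACN proof does --- but it is work the lemma does not require, since the $3\times\mathrm{LP}$ bound is already a known theorem and the whole point of the lemma is the cheap observation that the \emph{same} LP also lower bounds \textsc{MinSTC+}. Your more hands-on route buys self-containedness and makes the structural parallel to Theorem~\ref{thm:mainalg} explicit, at the cost of length; the paper's route is a three-sentence reduction to a cited result. Both are valid, and your flagged worry about the disjointness bookkeeping in the dual-feasibility check is real but is exactly the content of the cited ACN argument, so it resolves the same way.
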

\begin{proof}
	The original analysis of \textsc{Pivot}~\cite{ailon2008aggregating} shows that the expected cost of this algorithm is at most 3 times the optimal solution value of the linear programming relaxation of objective~\eqref{eq:stcblp}. This linear program lower bounds \textsc{MinSTC+} in addition to lower bounding \cc{}. Because every clustering of $G$ also maps to a vertex cover in its open wedge hypergraph (i.e., a valid STC+ labeling), \textsc{Pivot} returns an edge-labeling that is a 3-approximation for \textsc{MinSTC+}.
\end{proof}
The fact that \textsc{Pivot} is a 3-approximation for both \cc{} and \textsc{MinSTC+} is somewhat surprising given the difference between these problems. As mentioned previously, every clustering of $G$ can be mapped to a vertex cover in the open wedge hypergraph $\mathcal{H}$, but the reverse statement is not true. It was recently shown that any $\alpha$-approximation for \textsc{MinSTC+} can be used to design a $(2\alpha)$-approximation for \cc{}~\cite{veldt2022correlation}. This procedure starts with an $\alpha$-approximate vertex cover in  $\mathcal{H}$ and then applies a rounding step that distorts the approximation by a factor of 2 in order to convert the vertex cover in $\mathcal{H}$ into a clustering in $G$. With this result in hand, one can also prove that any $\alpha$-approximation for \cc{} can provide a $(2\alpha)$-approximation for \textsc{MinSTC+}. However, Lemma~\ref{lem:pivot} indicates that \textsc{Pivot} is able to overcome this factor $2$ difference. 

While Lemma~\ref{lem:pivot} provides insight into one relationship between \textsc{Pivot} and \textsc{NeighborCover}, there are still a few key differences between how these algorithms apply to \vc{} problems. First of all, \textsc{Pivot} applies to a very specific type of 3-uniform hypergraph, and even then only implicitly. Its analysis provides no guarantees for the general 3-uniform hypergraph \vc{} problem, while \textsc{NeighborCover} applies to all graph \vc{} problems. Secondly, \textsc{NeighborCover} applies to node-weighted \vc{}, whereas \textsc{Pivot} does not apply to weighted \cc{}. Using a weighted shuffling procedure such as Algorithm~\ref{alg:randperm} to choose pivot nodes does not make sense in the context of \cc{}, since weighted versions of \cc{} involve \emph{edge} weights and not node weights. Finally, perhaps the most interesting difference is that a single iteration of \textsc{Pivot} (implicitly) adds multiple nodes in $\mathcal{H}$ to an independent set. This is because clustering a pivot node $v \in V$ with its neighbors in $G$ means \emph{not} making a mistake at all node pairs involving $v$. In other words, the nodes in $\mathcal{H}$ corresponding to multiple node pairs in $G$ will not be added to the implicit vertex cover. As an example, the clustering of graph $G$ in Figure~\ref{fig:cc2vc} can be obtained by selecting nodes $2$ and $4$ as pivots, in that order. When $2$ is selected as a pivot, all nodes in $\mathcal{H}$ other than the node corresponding to edge $(3,4) \in E$ are added to an independent set in $\mathcal{H}$. In contrast, \textsc{NeighborCover} adds a single node to an independent set in each iteration. 

\subsection{Equivalence among Pivot, GreedyMIS, and NeighborCover}
Although \textsc{NeighborCover} grows an independent set in $G$ in a different way than \textsc{Pivot} grows an independent set in the open wedge hypergraph of $G$, this is essentially because the algorithms are actually very similar in a different regard. Namely, they both operate \emph{on the graph $G$} by selecting a random undecided node in each iteration and making a decision about how to deal with that node's undecided neighbors. Here, \emph{undecided} either means \emph{unclustered} (in the case of \textsc{Pivot}) or \emph{not assigned to a cover or independent set} (in the case of \textsc{NeighborCover}).
\begin{algorithm}[t]
	\caption{\textsc{Pivot}($G$)}
	\label{alg:pivot}
	\begin{algorithmic}[5]
		\State $\mathcal{U} = V$ \hspace{2cm} {\tt // unclustered node set}
		\State For $v \in V$, $c[v] = 0$ {\tt // initialize cluster indicator vector}
		\State $\mathit{clus} = 1$\hspace{1.8cm}  {\tt // current cluster index}
		\State Generate random uniform node permutation $\sigma$
		\For{$v = v_{\sigma(1)}, v_{\sigma(2)}, \hdots, v_{\sigma(n)}$}
		\If{$v \in \mathcal{U}$}
		\State $c[v] = \mathit{clus}$
		\State $\mathcal{U} \leftarrow  \mathcal{U} \backslash \{v\}$
		\For{ $u \in N(v) \cap  \mathcal{U}$}
		\State $c[u] = \mathit{clus}$
		\State $\mathcal{U} \leftarrow  \mathcal{U} \backslash \{u\}$
		\EndFor
		\State $\mathit{clus} \leftarrow \mathit{clus} + 1$
		\EndIf
		\EndFor
		\State Return $c$
	\end{algorithmic}
\end{algorithm}
Pseudocode for \textsc{Pivot} is given in Algorithm~\ref{alg:pivot}, written in a way that best highlights its close relationship to the unweighted version of \textsc{NeighborCover}, i.e., \textsc{GreedyMIS}. In particular, the set of {pivot} nodes defining its clusters exactly corresponds to a maximal independent set grown from a uniform random ordering of nodes. This relationship between \textsc{Pivot} and \textsc{GreedyMIS} has already been noted in previous work~\cite{behnezhad2022almost,fischer2019tight}. Combining this observation with Theorem~\ref{thm:mainalg} and Lemma~\ref{lem:pivot} leads to the following simple corollary.
\begin{corollary}
	\label{cor:3problems}
	Running \textsc{GreedyMIS} on a graph $G$ simultaneously produces a maximal independent set in $G$, an expected 2-approximate \vc{} for $G$, and pivot nodes for an expected 3-approximation for \cc{} and 3-approximation for \textsc{MinSTC+} on $G$.
\end{corollary}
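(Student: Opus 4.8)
The plan is to observe that Corollary~\ref{cor:3problems} is essentially a bookkeeping statement that stitches together three facts already established in the excerpt, so the proof should simply invoke those facts in sequence rather than introduce anything new. First I would note that on an unweighted graph (where $w_v = 1$ for all $v$), the sampling step of \textsc{NeighborCover} — choosing $u \in \mathcal{U}$ proportional to $w_u$ — reduces to choosing $u$ uniformly at random from $\mathcal{U}$, which is exactly the behavior induced by processing nodes in a uniform random permutation and skipping decided ones. Hence on unweighted graphs \textsc{NeighborCover} and \textsc{GreedyMIS} produce the same distribution over the pair $(\mathcal{I}, \mathcal{C})$, where $\mathcal{I}$ is the independent set and $\mathcal{C} = V \setminus \mathcal{I}$ is the cover. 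By construction $\mathcal{I}$ is a maximal independent set, and by Theorem~\ref{thm:mainalg} (specialized to unit weights, which is already flagged in the excerpt as an immediate consequence) $\mathcal{C}$ is an expected 2-approximate vertex cover.

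Next I would invoke the equivalence between \textsc{GreedyMIS} and \textsc{Pivot} that the paper recalls from~\cite{behnezhad2022almost,fischer2019tight}: the set of pivot nodes selected by \textsc{Pivot}, when it processes a uniform random permutation $\sigma$, is exactly the maximal independent set grown greedily from $\sigma$ — i.e., a node becomes a pivot precisely when it is still unclustered at its turn, which is the same condition under which \textsc{GreedyMIS} adds it to $\mathcal{I}$. So running \textsc{GreedyMIS} and reading off $\mathcal{I}$ gives a valid choice of pivot set for \textsc{Pivot}, and the clustering \textsc{Pivot} would produce from those pivots is determined. By the analysis of Ailon, Charikar, and Newman~\cite{ailon2008aggregating}, this clustering is an expected 3-approximation for \cc{}, and by Lemma~\ref{lem:pivot} the same run is simultaneously an expected 3-approximation for \textsc{MinSTC+} (equivalently, for minimum vertex cover in the open wedge hypergraph of $G$). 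Assembling these observations — $\mathcal{I}$ is a maximal independent set; $V \setminus \mathcal{I}$ is an expected 2-approximate vertex cover; $\mathcal{I}$ serves as pivot nodes yielding an expected 3-approximation for \cc{} and for \textsc{MinSTC+} — gives the corollary.

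There is no real obstacle here: the only thing to be careful about is making precise the sense in which the three outputs come from a \emph{single} run, namely that one uniform random permutation $\sigma$ simultaneously determines the MIS $\mathcal{I}$, the cover $V \setminus \mathcal{I}$, and the pivot-based clustering, so that the expectations in Theorem~\ref{thm:mainalg}, in~\cite{ailon2008aggregating}, and in Lemma~\ref{lem:pivot} are all taken over the same randomness. Once that coupling is stated explicitly, the rest is a direct citation of the three results, and the proof is a short paragraph.
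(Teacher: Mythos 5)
Your proposal is correct and follows the same route the paper takes: the paper presents this corollary as an immediate consequence of Theorem~\ref{thm:mainalg}, Lemma~\ref{lem:pivot}, and the observation (recalled from~\cite{behnezhad2022almost,fischer2019tight}) that the pivot set of \textsc{Pivot} under a uniform random permutation is exactly the MIS grown by \textsc{GreedyMIS} from the same permutation. Your extra care in spelling out the coupling — that a single random permutation $\sigma$ simultaneously determines the MIS, the cover $V \setminus \mathcal{I}$, and the pivot clustering, so all expectations are over the same randomness — is the right way to make the paper's informal "combining these observations" precise.
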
 

One interesting consequence of this corollary is that a parallel variant of \textsc{GreedyMIS} also directly provides a simple parallel approximation algorithm for \vc{}. This parallel variant generates a uniform random ordering of nodes, and in each round, all nodes that come before their neighbors in the ordering are added to the independent set. These nodes and their neighbors are removed from the graph, and the algorithm repeats this procedure in rounds until no nodes are left~\cite{blelloch2012greedy,fischer2019tight}. For a fixed ordering of nodes, this returns the same output as the sequential \textsc{GreedyMIS} algorithm, and with high probability the algorithm requires only $O(\log n)$ rounds~\cite{fischer2019tight} before termination. Our equivalence result implies this is a parallel $O(\log n)$-round 2-approximation for  \vc{} as well. 

	\section{Fast and Simple Algorithms for Edge-Deletion Problems}
\label{sec:faster}
\textsc{NeighborCover} can be used to design fast and simple approximation algorithms for combinatorial problems that can be reduced to \vc{}. In particular, we consider certain edge-deletion problems whose reduction to \vc{} leads to a graph with a very special edge structure. By implicitly implementing \textsc{NeighborCover} and taking advantage of this special edge structure, we can design methods that are significantly faster than forming the \vc{} instance explicitly and applying a linear-time \vc{} algorithm as a black-box. 

For the problems we consider, a careful implicit implementation of edge-visiting algorithms for \vc{} (e.g., Algorithms~\ref{alg:pitt} and~\ref{alg:bye}) can also lead to improvements over forming the reduced graph explicitly. However, one advantage of \textsc{NeighborCover} is that it is often simpler to implicitly iterate through the nodes of the reduced graph in an efficient way than to implicitly iterate through the edges. If \emph{all} edges in the reduced graph are visited implicitly, this leads to the same runtime issues as explicitly forming the \vc{} instance, so one must carefully reason about edges that can be safely skipped. We avoid this issue altogether when implicitly implementing \textsc{NeighborCover}, which iterates over nodes in the reduced graph.

\subsection{Minimum Delete-to-Matching}
As an illustrative warm-up, we consider a simple edge-deletion problem where the goal is to find a minimum weight set of edges in an edge-weighted undirected graph $\mathcal{G} = (\mathcal{V} ,\mathcal{E})$ to delete in order to convert $\mathcal{G}$ into a matching. Let $e_i \in \mathcal{E}$ denote the $i$th edge (for an arbitrary ordering of edges), and let $\omega_i \geq 0$ be its weight. We refer to this as \emph{minimum delete-to-matching}, or simply \textsc{MinD2M}. 
This problem can be optimally solved in polynomial time by computing a maximum matching and deleting all edges not in the matching. We will illustrate how to obtain a much faster 2-approximation algorithm using an implicit implementation of \textsc{NeighborCover}, which amounts to finding a maximal matching in the edge-weighted graph $\mathcal{G}$ using a specific edge-sampling strategy. 

\begin{figure}[t]
	\centering
	\includegraphics[width = .85\linewidth]{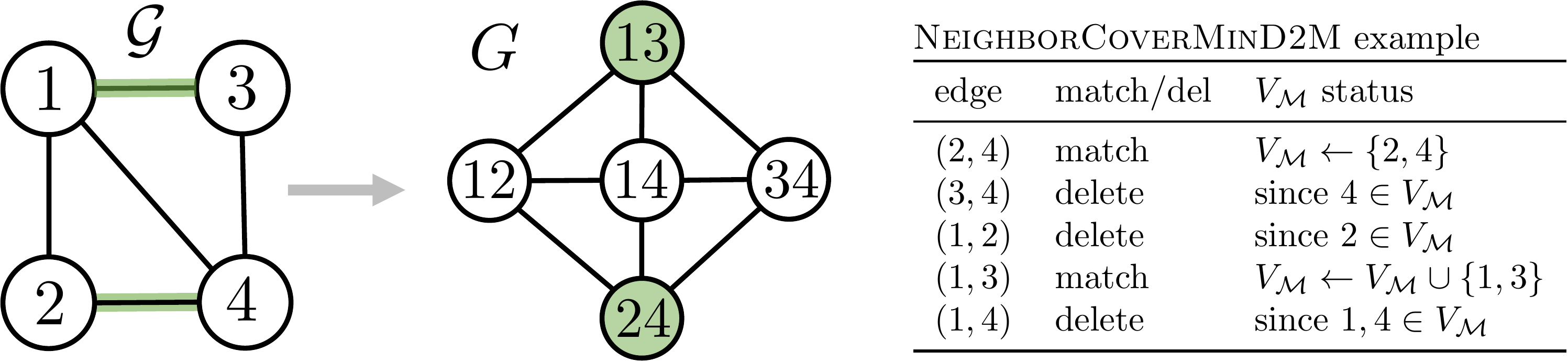}
	\caption{A small example of converting an (unweighted) \textsc{MinD2M} problem on a graph $\mathcal{G}$ into a \vc{} problem on graph $G$ and implicitly applying \textsc{NeighborCover} for the given ordering of edges in $\mathcal{G}$. Maximal independent set nodes are in green, and correspond to a matching in $\mathcal{G}$. An implicit implementation of \textsc{NeighborCover} does not need to check all edges in $G$, but can tell when an edge $(i,j)$ in $\mathcal{G}$ (i.e., node in $G$) is ineligible to be added to a matching in $\mathcal{G}$ (i.e., independent set in $G$) based on whether nodes $i$ and $j$ are already in the matching.}
	\label{fig:mind2m}
\end{figure}
\begin{algorithm}[t]
	\caption{$\textsc{NeighborCoverMinD2M}(G)$}
	\label{alg:misd2m}
	\begin{algorithmic}
		\State \textbf{Input:} edge-weighted graph $\mathcal{G} = (\mathcal{V} ,\mathcal{E})$
		\State \textbf{Output:} 2-approximate solution to \textsc{MinD2M}
		\State $\mathcal{D} \leftarrow \emptyset$  {\tt // edges to delete}
		\State $\mathcal{M} \leftarrow \emptyset$ {\tt // edges in matching}
		\State $\mathcal{V}_\mathcal{M} = \emptyset$ {\tt // nodes in the matched edges}
		\State $\sigma = \textsc{WeightedShuffle}(\{\omega_1, \omega_2, \hdots , \omega_{|\mathcal{E}|}\})$
		\For{$e = e_{\sigma(1)}, e_{\sigma(2)}, \hdots, e_{\sigma(|\mathcal{E}|)}$}
		\State $e = (i,j) $  { \tt // identify nodes in $e$}
		\If{$i \in \mathcal{V}_\mathcal{M}$ or $j \in \mathcal{V}_\mathcal{M}$}
		\State $\mathcal{D} \leftarrow \mathcal{D} \cup \{e\}$
		\Else
		\State $\mathcal{M} \leftarrow \mathcal{M} \cup \{e\}$
		\State $ \mathcal{V}_\mathcal{M} \leftarrow \mathcal{V}_\mathcal{M} \cup \{i, j\}$
		\EndIf
		\EndFor
		\State Return $\mathcal{D}$
	\end{algorithmic}
\end{algorithm}

The \textsc{MinD2M} objective on $\mathcal{G}$ is equivalent to solving \vc{} on the line graph $G = (V,E)$ of $\mathcal{G}$: a node in $\mathcal{G}$ is an edge in $G$, and a maximal independent set in $\mathcal{G}$ is a maximal matching in $G$. Explicitly forming the line graph of $\mathcal{G}$ and applying any linear-time \vc{} algorithm as a black-box yields a 2-approximation algorithm for \textsc{MinD2M}.\footnote{One must be careful to distinguish this from a reduction that has often been applied other direction, namely, approximating \emph{unweighted} \vc{} by first obtaining a maximal matching, which itself can be found by running a MIS algorithm on the line graph. Here, instead of using an \emph{unweighted} maximal matching algorithm to approximate \emph{unweighted} \vc{}, were are using a \emph{weighted} \vc{} approximation algorithm to approximate (the minimization version of) a \emph{weighted} matching problem.}
This basic approach takes $O(|\mathcal{E}|^2)$ time, as this is a bound on the number of edges in $G$. Even if we avoid forming $G$ explicitly, \vc{} algorithms that iterate through all of the edges in $G$ will take $O(|\mathcal{E}|^2)$ time, even if they only implicitly visit edges in $G$ by iterating through pairs of adjacent edges in $\mathcal{G}$. In contrast, Algorithm~\ref{alg:misd2m} is an implicit implementation of \textsc{NeighborCover} applied to \textsc{MinD2M} that takes $O(|\mathcal{E}| \log |\mathcal{E}|)$ time when $\mathcal{G}$ has arbitrary edge weights, and $O(|\mathcal{E}|)$ time in the unweighted case. 
At each iteration, the MIS algorithm must check whether a node in $G$ (i.e., an edge in $\mathcal{G}$) can be added to an independent node set in $G$ (i.e., a matching $\mathcal{M}$ in $\mathcal{G}$). The key to a fast implementation is realizing that we can quickly see if an edge can be added to $\mathcal{M}$ by checking whether either of its nodes already belongs to an edge in $\mathcal{M}$. Selecting a random permutation of edges takes $O(|\mathcal{E}| \log |\mathcal{E}|)$ in the weighted case or only $O(|\mathcal{E}|)$ in the unweighted case. The rest of the algorithm takes $O(|\mathcal{E}|)$ time, since each iteration just involves visiting an edge $(i,j) \in \mathcal{E}$, checking if $i$ or $j$ already belongs to a matched node set $\mathcal{V}_\mathcal{M}$, and then either adding $(i,j)$ to the matching or deleting it. See Figure~\ref{fig:mind2m} for an illustration of running Algorithm~\ref{alg:misd2m}. To summarize, keeping track of one additional fact about each node in $\mathcal{G}$ (specifically, whether or not it is in a set of matched nodes $\mathcal{V}_\mathcal{M}$) is sufficient to avoid iterating through all edges in the reduced \vc{} instance. 

\subsection{DAG Edge Deletion}
Let $D = (\mathcal{V}, A)$ be an edge-weighted directed acyclic graph where $e_i \in A$ is the $i$th directed edge and $\omega_i \geq 0$ is its weight. The DAG \textsc{Edge Deletion} problem with parameter $k$, or simply $\textsc{Ded}$-$k$, seeks a minimum weight set of edges to remove in order to destroy all paths of length $k$ in $D$. The problem was first considered by Kendre et al.~\cite{kenkre2017approximability} as the minimization version of the \textsc{Max}-$k$-\textsc{Ordering} problem. It is the edge-deletion version of the DAG \textsc{Vertex Deletion} problem~\cite{paik1994deleting}. 

We focus on $\textsc{Ded}$-$2$ specifically. This can be reduced in an approximation preserving way to an instance of \vc{} on a graph $G = (V,E)$ by replacing each directed edge $e$ in $D$ with a node $v_e$ in $G$, and by adding an edge between two nodes $v_e$ and $v_f$ in $G$ when the edges $\{e,f\}$ define a directed path in $D$. $\textsc{Ded}$-$2$ is known to be NP-hard, as can be observed from its equivalence (at optimality) with the maximum directed cut problem~\cite{klein2016approximability,lampis2008algorithmic}. Kendre et al.~\cite{kenkre2017approximability} presented two combinatorial 2-approximation algorithms for this problem, one for unweighted graphs and another for weighted graphs. For our purposes it is interesting to note that the unweighted algorithm corresponds to an implicit implementation of the maximal matching method (Algorithm~\ref{alg:match})---while there exists a directed 2-path in the graph, find it and delete both edges. The weighted algorithm is similarly an implicit implementation of the local ratio method (Algorithm~\ref{alg:bye}). Kendre et al.~\cite{kenkre2017approximability} confirmed that the algorithms run in polynomial time, but did not provide any strategies for quickly finding paths in the directed graph that need to be covered. The number of length two paths can be significantly larger than $O(|A|)$, so iterating through all of these paths can be much worse than linear-time in terms of the size of $D$, even in the unweighted case.
\begin{figure}[t]
	\centering
	\includegraphics[width = .85\linewidth]{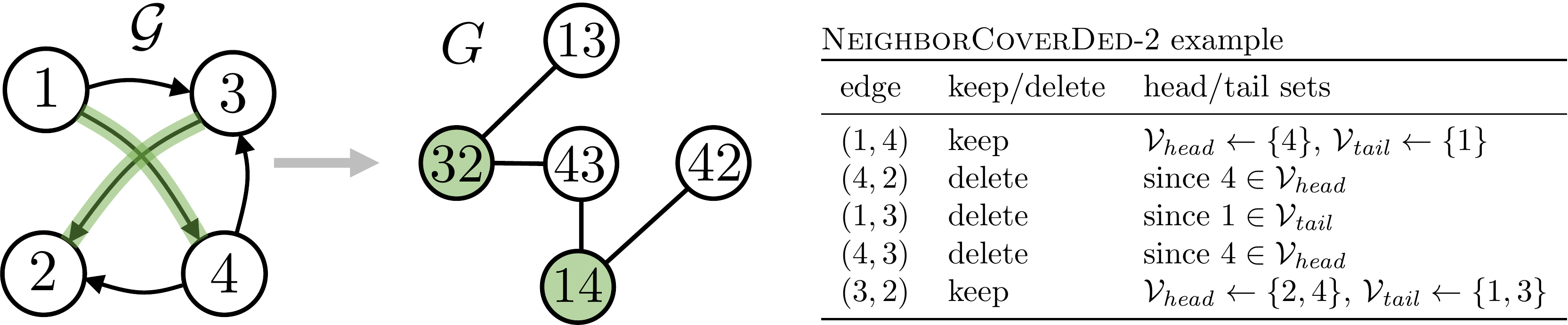}
	\caption{A small example of converting an instance of \textsc{Ded}-$2$ on $\mathcal{G}$ into a \vc{} problem on graph $G$ and implicitly applying \textsc{NeighborCover} for the given ordering of edges in $\mathcal{G}$ (nodes in $G$). The algorithm does not need to check all edges in $G$. It takes constant time to check whether endpoints of a directed edge $(i,j)$ are in $\mathcal{V}_\mathit{head}$ and $\mathcal{V}_\mathit{tail}$, and this is sufficient to know whether node $\Circled{ij}$ can be added to an independent set in $G$.}
	\label{fig:ded2}
\end{figure}
\begin{algorithm}
	\caption{$\textsc{NeighborCoverDed2}(G)$}
	\label{alg:ncded2}
	\begin{algorithmic}
		\State \textbf{Input:} edge-weighted DAG $D = (\mathcal{V},A)$
		\State \textbf{Output:} 2-approximate solution to \textsc{Ded}-$2$
		\State $\mathcal{D} \leftarrow \emptyset$  {\tt // edges to delete}
		\State $\mathcal{K} \leftarrow \emptyset$ {\tt // edges to keep}
		\State $\mathcal{V}_\mathit{head} = \emptyset$ {\tt // head nodes for edges in $\mathcal{K}$}
		\State $\mathcal{V}_\mathit{tail} = \emptyset$ {\tt // tail nodes for edges in $\mathcal{K}$}
		\State $\sigma = \textsc{WeightedShuffle}(\{\omega_1, \omega_2, \hdots , \omega_{|A|}\})$
		\For{$e = e_{\sigma(1)}, e_{\sigma(2)}, \hdots, e_{\sigma(|A|)}$}
		\State $e = (i,j) $  { \tt // identify nodes in $e$}
		\If{$i \in \mathcal{V}_\mathit{head}$ or $j \in \mathcal{V}_\mathit{tail}$}
		\State $\mathcal{D} \leftarrow \mathcal{D} \cup \{e\}$
		\Else
		\State $\mathcal{K} \leftarrow \mathcal{K} \cup \{e\}$
		\State $ \mathcal{V}_\mathit{tail} \leftarrow  \mathcal{V}_\mathit{tail}  \cup \{i\}$
		\State $ \mathcal{V}_\mathit{head} \leftarrow  \mathcal{V}_\mathit{head}  \cup \{j\}$
		\EndIf
		\EndFor
		\State Return $\mathcal{D}$
	\end{algorithmic}
\end{algorithm}

Algorithm~\ref{alg:ncded2} is an implicit implementation of \textsc{NeighborCover} that gives a 2-approximation for \textsc{Ded}-$2$ in time $O(|A|)$ for unweighted graphs and $O(|A| \log |A|)$ time for the weighted case. This method builds an independent set and a vertex cover in $G$ implicitly by building a set $\mathcal{K}$ of edges to keep and a set $\mathcal{D}$ of edges to delete in the acyclic graph $D$. The algorithm searches through directed edges in $D$ (i.e., nodes in $G$), in search of edges that can be added to the set $\mathcal{K}$ without creating 2-paths. Similar to our observations for \textsc{MinD2M}, there is an easy way to check whether an edge is ``allowed'' to be added to $\mathcal{K}$. Given an edge $(i,j)$ where $i$ is the tail node and $j$ is the head node, we know we can add $(i,j)$ to $\mathcal{K}$ as long as there is currently no edge in $\mathcal{K}$ where $i$ is the head or $j$ is the tail. This can be checked in constant time in each iteration. See Figure~\ref{fig:ded2} for an illustration of this process.

\subsection{Edge-colored Hypergraph Clustering}
We finally present a simple new approximation algorithm for the \textsc{Colored Edge Clustering} problem in hypergraphs~\cite{amburg2020clustering}. The input is a hypergraph $\mathcal{H} = (\mathcal{V}, \mathcal{E}, \ell, k)$ where $w_e \geq 0$ denotes the weight of a hyperedge $e \in \mathcal{E}$ and $\ell \colon \mathcal{E} \rightarrow \{1, 2, \hdots , k\}$ maps each hyperedge to one of $k$ colors. The goal is to construct a node color label function $Y \colon V \rightarrow \{1,2, \hdots , k\}$ that disagrees as little as possible with the hyperedge colors. We say a hyperedge $e$ is \emph{satisfied} if $Y[u] = \ell(e)$ for every $e \in u$. Formally, the goal is to minimize the weight of unsatisfied hyperedges. This is equivalent to deleting a minimum weight set of hyperedges so that remaining hyperedges of different colors never overlap.
A node labeling can be viewed as a partitioning of nodes into $k$ clusters where each cluster corresponds to one color. The problem is known to be APX-hard, but various approximation algorithms have been designed~\cite{amburg2020clustering,veldt2022correlation}. 


The best approximation factors for \textsc{Colored Edge Clustering}  are based on linear programming~\cite{amburg2020clustering,veldt2022correlation}, but faster 2-approximations are obtained by reducing \textsc{Colored Edge Clustering} to \vc{}~\cite{veldt2022optimal}. For this reduction, each hyperedge $e \in \mathcal{E}$ corresponds to a node $v_e$ with node-weight $w_e$ in a new graph $G = (V,E)$. Two nodes in $G$ share an edge if they correspond to hyperedges in $\mathcal{H}$ that overlap and have different colors. A naive approach that explicitly iterates through all hyperedge pairs to form $G = (V,E)$, and then applies a black-box linear time \vc{} algorithm will take $O(\sum_{v \in \mathcal{V}} d_v^2 + |\mathcal{E}|^2)$-time where $d_v$ is the degree of node $v \in \mathcal{V}$. However, an implicit implementation of Pitt's \vc{} algorithm leads to a 2-approximation with a runtime of $O(\sum_{e \in E} |e|)$~\cite{veldt2022optimal}, which is linear in terms of the hypergraph size. We complement this result with a an even simpler randomized 2-approximation that corresponds to an implicit implementation of \textsc{NeighborCover} (Algorithm~\ref{alg:nccolor}). 

\begin{algorithm}[t]
	\caption{$\textsc{NeighborCoverColorEC}(G)$}
	\label{alg:nccolor}
	\begin{algorithmic}
		\State{\bfseries Input:} Edge-colored hypergraph $\mathcal{H} = (\mathcal{V}, \mathcal{E}, \ell, k)$
		\State {\bfseries Output:} Node label function $Y \colon V \rightarrow \{1,2, \hdots k\}$
		\State $\sigma = \textsc{WeightedShuffle}(\{w_1, w_2, \hdots , w_{|\mathcal{E}|}\})$
		\State For each $v \in V$ set $Y[u] = 0$
		\For{$e = e_{\sigma(1)}, e_{\sigma(2)}, \hdots, e_{\sigma(|\mathcal{E}|)}$}
		\If{$Y[u] \in \{\ell(e), 0\}$ for every $u \in e$}
		\For{$u \in e$}
		\State $Y[u] = \ell(e)$
		\EndFor
		\EndIf
		\EndFor
		\State Return $Y$
	\end{algorithmic}
\end{algorithm}
\begin{figure}[t]
	\centering
	\includegraphics[width = .8\linewidth]{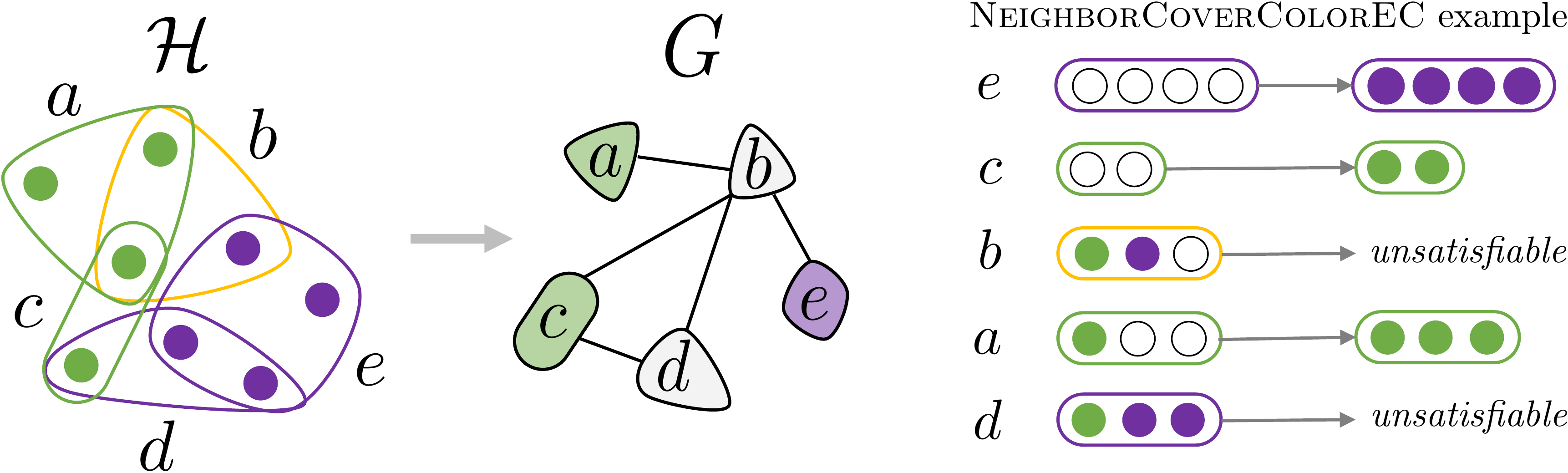}
	\caption{A small example of converting an edge-colored hypergraph $\mathcal{H}$ into \vc{} on a graph $G$ and implicitly applying \textsc{NeighborCover} to approximate \textsc{Colored Edge Clustering}. Hyperedge colors are given as input; node colors in $\mathcal{H}$ are the result of running the algorithm. Colored nodes in $G$ indicate maximal independent set nodes, and are colored to match hyperedge colors in $\mathcal{H}$. Gray nodes in $G$ are vertex cover nodes. Growing a maximal independent set in $G$ is equivalent to visiting hyperedges at random and \emph{satisfying} them when possible, meaning that all nodes are given the color of the hyperedge. If an edge is not satisfiable, the algorithm does nothing.}
	\label{fig:colorec}
\end{figure}

Finding a maximal independent set in $G$ is equivalent to finding a maximal set of satisfied  hyperedges. Using the hyperedge-deletion view of the objective, an unsatisfied hyperedge is a hyperedge that must be deleted. 
Following the basic strategy of \textsc{NeighborCover}, Algorithm~\ref{alg:nccolor} iterates through the hyperedges in $\mathcal{H}$ (i.e., nodes in $G$) and greedily adds them to the satisfied set (i.e., an independent set in $G$). The only reason to not add a hyperedge $e$ to the satisfied set is if an overlapping hyperedge of a different color (i.e., an adjacent node in $G$) was already satisfied in an earlier iteration. This means that at least one of the nodes $u \in e$ was already assigned a color $Y[u] \neq \ell(e)$. Therefore, in an iteration where we visit a hyperedge $e$, we simply need to check the current color assignment for each node in $e$, and give all these node color $\ell(e)$ if possible. The fact that this algorithm is a 2-approximation is a corollary of Theorem~\ref{thm:mainalg}. 
\begin{corollary}
	Algorithm~\ref{alg:nccolor} is a randomized 2-approximation for \textsc{Colored Edge Clustering}. Its runtime is $O(|\mathcal{E}| \log |\mathcal{E}| + \sum_{e \in \mathcal{E}} |e|)$ for weighted hypergraphs and $O(\sum_{e \in \mathcal{E}} |e|)$ for unweighted hypergraphs.
\end{corollary}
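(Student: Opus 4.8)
The plan is to prove the corollary in two parts: the approximation guarantee and the runtime bound. For the approximation guarantee, I would argue that Algorithm~\ref{alg:nccolor} is exactly an implicit implementation of \textsc{NeighborCover} (Algorithm~\ref{alg:nc2}) run on the reduction graph $G = (V,E)$ described just before the corollary, where each hyperedge $e \in \mathcal{E}$ becomes a node $v_e$ of weight $w_e$ and two such nodes are adjacent iff the corresponding hyperedges overlap and have different colors. First I would observe that \textsc{WeightedShuffle} on the hyperedge weights is identical to \textsc{WeightedShuffle} on the node weights of $G$, so the two algorithms process nodes of $G$ in the same random order with the same distribution. Then I would show the per-iteration behavior matches: when Algorithm~\ref{alg:nccolor} visits hyperedge $e$, the test ``$Y[u] \in \{\ell(e), 0\}$ for every $u \in e$'' succeeds if and only if no previously-satisfied hyperedge overlaps $e$ in a conflicting color, which is exactly the condition that $v_e$ has no neighbor already in the independent set $\mathcal{I}$ of \textsc{NeighborCover}. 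In that case $v_e$ joins $\mathcal{I}$ (and $e$ is satisfied); otherwise $v_e$ joins the cover $\mathcal{C}$ (and $e$ is left unsatisfied, i.e., deleted). So the set of unsatisfied hyperedges returned is precisely the cover $\mathcal{C}$ produced by \textsc{NeighborCover} on $G$, whose expected weight is at most twice the optimal vertex cover of $G$ by Theorem~\ref{thm:mainalg}. Finally, since the \vc{} reduction is approximation-preserving (the optimum weight of unsatisfied hyperedges equals the minimum weight vertex cover in $G$, as cited from~\cite{veldt2022optimal}), this yields an expected 2-approximation for \textsc{Colored Edge Clustering}.

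For the runtime, I would account for the two phases separately. Generating $\sigma$ via \textsc{WeightedShuffle} on $|\mathcal{E}|$ weights costs $O(|\mathcal{E}| \log |\mathcal{E}|)$ in the weighted case and $O(|\mathcal{E}|)$ in the unweighted case (where a plain Fisher--Yates shuffle suffices), per the discussion accompanying Algorithm~\ref{alg:randperm}. Initializing $Y$ costs $O(|\mathcal{V}|)$, which is absorbed by $O(\sum_{e \in \mathcal{E}} |e|)$ assuming no isolated nodes (or can be stated with a $+|\mathcal{V}|$ term). For the main loop, I would charge the work of processing hyperedge $e$ — reading $Y[u]$ for every $u \in e$ in the test, and possibly writing $Y[u] = \ell(e)$ for every $u \in e$ — to $O(|e|)$, so the loop totals $O(\sum_{e \in \mathcal{E}} |e|)$. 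Adding the phases gives $O(|\mathcal{E}| \log |\mathcal{E}| + \sum_{e \in \mathcal{E}} |e|)$ for weighted hypergraphs and $O(\sum_{e \in \mathcal{E}} |e|)$ for unweighted hypergraphs, matching the claim.

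The main obstacle, and the step I would be most careful about, is verifying the exact equivalence between Algorithm~\ref{alg:nccolor}'s local test and the independent-set membership test in \textsc{NeighborCover} — in particular that the ``$0$'' (uncolored) sentinel correctly tracks the distinction between ``$v_e$ has no decided neighbor yet'' versus ``$v_e$ has a neighbor in the cover but none in the independent set.'' The subtlety is that a node $u \in e$ might have been touched by an earlier hyperedge $e'$ of the \emph{same} color $\ell(e)$; then $Y[u] = \ell(e) \neq 0$, yet $e$ is still satisfiable and $v_e$ should still be eligible to enter $\mathcal{I}$. I would confirm that this is consistent with the graph $G$, since same-colored overlapping hyperedges are \emph{non-adjacent} in $G$ by construction, so $v_{e'}$ being in $\mathcal{I}$ does not block $v_e$. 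The test ``$Y[u] \in \{\ell(e), 0\}$ for every $u \in e$'' precisely captures ``every neighbor of $v_e$ that has been decided so far is in the cover, not the independent set,'' which is the right eligibility condition. Once this correspondence is pinned down the rest is bookkeeping.
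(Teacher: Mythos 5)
Your argument is correct and follows the same strategy as the paper: recognize Algorithm~\ref{alg:nccolor} as an implicit run of \textsc{NeighborCover} on the reduction graph $G$ from~\cite{veldt2022optimal}, invoke Theorem~\ref{thm:mainalg}, and count the shuffle and loop phases separately for the runtime. Your extra care in checking that the sentinel test ``$Y[u] \in \{\ell(e),0\}$ for every $u\in e$'' exactly matches the independent-set eligibility condition (including the same-color-overlap case) is a welcome level of detail the paper leaves implicit, but it does not change the route of the proof.
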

For the runtime analysis, note that it takes $O(\sum_{e \in \mathcal{E}} |e|)$ time to iterate through all the edges looking for satisfiable edges. The additional $O(|\mathcal{E}| \log |\mathcal{E}|)$ term for the weighted case comes from applying Algorithm~\ref{alg:randperm} to order edges.

	\section{Conclusions and Discussion}
	We have introduced a simple new approximation algorithm for \vc{} and have discussed its connections to related previous algorithms for clustering nodes, labeling edges, and finding maximal independent sets. This method leads to fast and simple approximation algorithms for certain edge-deletion problems that can be reduced to \vc{} in an approximation preserving way. One open direction is to explore other problems that are reducible to \vc{} which might also benefit from implicit implementations. There are in fact examples where applying our method implicitly \emph{does not} lead to runtime improvements over explicitly forming the reduced \vc{} instance. One example is a version of the STC edge-labeling problem that does not allow edge additions~\cite{sintos2014using}. This problem can be reduced to \vc{} and provides a lower bound for a variant of \cc{} called \textsc{Cluster Deletion}~\cite{gruttemeier2020relation,sintos2014using,veldt2022correlation}. 
	We were unable to develop faster approximation algorithms for either problem using implicit implementations of \textsc{NeighborCover}, as there does not appear to be a way to avoid iterating through all edges in the reduced \vc{} instance. 
	
	One disadvantage of our method is that the weighted version involves an $O(|V| \log |V|)$ time node sampling step, whereas several previous algorithms for \vc{} run in linear time even in the node-weighted case. An $O(|E|)$-time implementation for the weighted version of our algorithm would be a useful improvement, though this seems challenging. Another advantage of some other \vc{} algorithms is that they generalize easily to hypergraph \vc{}. Although \textsc{Pivot} can be viewed as a 3-approximation algorithm for \textsc{Vertex Cover} in a very restrictive type of 3-uniform hypergraph, generalizing \textsc{NeighborCover} to the general 3-uniform hypergraph \textsc{Vertex Cover} problem remains open.
	
	The connections highlighted in Section~\ref{sec:equiv} suggest several other compelling directions for future research. Our equivalence results show that a simple parallel version of the greedy MIS algorithm also approximates \vc{}.
	Although this is not the first parallel 2-approximation for \vc{}, nor the best in terms of the number of rounds, it is particularly simple and has the attractive feature that it simultaneously solves multiple problems. There do exist $O(\log \log n)$-round algorithms for finding a maximal independent set and for approximating \vc{}~\cite{ghaffari2018improved}, but a different approach is used for each problem. Furthermore, the first $O(\log \log n)$-round $(2+\varepsilon)$-approximation for \emph{weighted} \textsc{Vertex Cover} was developed separately and required yet a different approach~\cite{ghaffari2020massively}. One interesting direction for future research is to explore whether the approximation guarantee for \textsc{NeighborCover}, coupled with the fact that this algorithm applies to weighted \textsc{Vertex Cover}, can be used to further simplify, unify, and improve existing parallel algorithms for MIS, \textsc{Vertex Cover}, and \textsc{Correlation Clustering}. Another open question is to see whether we can leverage weighted \vc{} algorithms to develop faster approximation algorithms for weighted variants of \cc{}.
	
	\bibliographystyle{plain}
	\bibliography{vertex-cover-bib}
\end{document}